\documentclass[aps,prl,reprint,superscriptaddress,amsmath,amssymb,nofootinbib]{revtex4-2}

\usepackage{amsthm}
\usepackage{mathtools}
\usepackage{enumitem}
\usepackage{tikz}
\usepackage[colorlinks=true,linkcolor=blue,citecolor=blue,urlcolor=blue]{hyperref}

\newcommand{\Mh}{\widehat{M}}
\newcommand{\Iop}{\widetilde{I}}
\newcommand{\R}{\mathbb{R}}
\newcommand{\Q}{\mathbb{Q}}
\newcommand{\Z}{\mathbb{Z}}
\newcommand{\T}{\mathbb{T}}
\newcommand{\bM}{\mathbf{M}}
\newcommand{\cR}{\mathcal{R}}
\newcommand{\cT}{\mathcal{T}}
\newcommand{\cK}{\mathcal{K}}
\DeclareMathOperator{\conv}{conv}
\DeclareMathOperator{\relint}{relint}

\newtheoremstyle{smplain}{4pt}{4pt}{\itshape}{}{\bfseries}{.}{0.5em}{}
\newtheoremstyle{smdef}{4pt}{4pt}{}{}{\bfseries}{.}{0.5em}{}
\theoremstyle{smplain}
\newtheorem{theorem}{Theorem}
\newtheorem{lemma}[theorem]{Lemma}
\newtheorem{proposition}[theorem]{Proposition}
\newtheorem{corollary}[theorem]{Corollary}
\theoremstyle{smdef}
\newtheorem{definition}[theorem]{Definition}

\begin{document}

\title{Arithmetic of Bohr Frequencies Governs Uniform Mixing\\ in Randomly Timed Quantum Spin Chains}

\author{Musung Kang}
\email{musung098@snu.ac.kr}
\affiliation{Seoul National University, Seoul, Republic of Korea}

\begin{abstract}
Randomizing the readout time averages the mode interference in a quantum walk.
We ask when this makes the site populations of a uniformly coupled $XY$ chain
exactly uniform for every initial site. For $N\ge2$ spins, this is possible
when $N+1$ is a power of two, a prime, or twice a prime.
It is impossible when $15\mid(N+1)$, $21\mid(N+1)$, or
$6\mid(N+1)$ with $N\ge11$.
Uniformity uniquely fixes the averaged cosine coherences according to the
mirror parities of the modes. Equal Bohr frequencies can impose incompatible
coherence values. We classify these parity collisions using vanishing sums
of roots of unity and prove existence in the positive cases through a
phase distribution on a torus.
\end{abstract}

\maketitle

\textit{Introduction.}
Uniformly coupled spin chains are simple quantum wires \cite{Bose03,Christandl04}.
An excitation injected at one site spreads coherently.
The transition probabilities at time $t$ form a doubly stochastic matrix $M(t)$.
Much research has focused on concentrating the excitation at a target site.
Perfect state transfer between the ends is impossible in uniform chains
of more than three spins \cite{Christandl04}.
Pretty good state transfer (PGST) allows fidelity arbitrarily close to one.
Between the ends of a nontrivial chain, it occurs exactly when $N+1$ is a prime,
twice a prime, or a power of two \cite{GKSS12}.
Arithmetic thus controls quantum transport.

Here we ask when an excitation can instead be spread uniformly.
We require a single readout protocol to work for every initial site.
For a chain of three or more spins, the full transition matrix cannot
approach $J/N$ arbitrarily closely at a single time \cite{Mon26}.
We prove below that the only trees with this approximation property
are the one-vertex graph, the two-vertex path, and the four-vertex star.

Averaging over time changes the problem.
If the readout time $T$ has distribution $\mu$, the observed transition
matrix is $\Mh[\mu]=\mathbb E\,M(T)$.
The timing distribution may arise from clock jitter or be chosen deliberately.
Baptista, Coutinho and Marques \cite{BCM24} found uniformly mixing
distributions for chains of two, three and four spins.
They asked whether every chain admits such a distribution.
We establish existence and nonexistence for two infinite families of chain lengths.

\medskip
\noindent\textbf{Theorem.} \textit{Consider a uniformly coupled $XY$ chain of $N$ spins and put
$q=N+1\ge3$.}
\begin{enumerate}
\item[(i)] \textit{If $q$ is a power of two, a prime, or twice a prime, then there are finitely many
readout times $t_j$ with weights $\mu_j$, and also a probability density, for which the averaged
population of every site is exactly $1/N$, for every initial site.}
\item[(ii)] \textit{If $6\mid q$ with $q\ge12$, or $15\mid q$, or $21\mid q$, then no probability
distribution of readout times has this property.}
\end{enumerate}
\medskip
The two lists are disjoint.
The main steps of the proofs are collected in the End Matter, and complete proofs are given in the Supplemental Material \cite{SM}.
Here we explain the mechanisms.
Cases other than (i) and (ii) are not decided by the Theorem.

\textit{Model.}
The Hamiltonian
$\hat H=\frac12\sum_{u=1}^{N-1}
(\sigma^x_u\sigma^x_{u+1}+\sigma^y_u\sigma^y_{u+1})$
preserves the number of excitations.
On the single-excitation subspace, it acts as the adjacency matrix $A$
of the path $P_N$ \cite{Christandl04,GKSS12}.
Its eigenvalues and spectral idempotents are
\begin{equation}\label{eq:spectrum}
\theta_r=2\cos\frac{r\pi}{q},\quad E_r=v_rv_r^{\mathsf T},
\end{equation}
where $(v_r)_k=\sqrt{2/q}\sin(rk\pi/q)$ for $1\le r,k\le N$.
We use $U(t)=e^{itA}$ and $M(t)_{kl}=|U(t)_{kl}|^2$.
The opposite sign convention for time evolution gives the same $M(t)$.
Averaging the spectral expansion over $\mu$ gives \cite{BCM24}
\begin{equation}\label{eq:BCM}
\Mh[\mu]=\sum_rE_r\circ E_r+2\sum_{r<s}y_{rs}\,E_r\circ E_s,
\end{equation}
where $\circ$ denotes the entrywise product and
\begin{equation}\label{eq:coh}
y_{rs}=\int_\R\cos\bigl((\theta_r-\theta_s)t\bigr)\,d\mu(t).
\end{equation}
We call these averaged cosines the \emph{coherences}.
They are the only features of $\mu$ on which $\Mh[\mu]$ depends.
We call $\mu$ \emph{uniformly mixing} if $\Mh[\mu]=J/N$,
where $J$ is the all-ones matrix.

\textit{Forced coherences.} Equation \eqref{eq:BCM} is linear in the coherences, but it is not
obvious that uniformity determines them, since the matrices $E_r\circ E_s$ might be linearly dependent.
Our first result is that uniformity does determine them.

\medskip
\noindent\textbf{Lemma (forced coherences).} \textit{If $\Mh[\mu]=J/N$, then for all $1\le r<s\le N$}
\begin{equation}\label{eq:forced}
y_{rs}=\begin{cases}-1/N, & s-r \text{ even},\\ 
0, & s-r \text{ odd}.\end{cases}
\end{equation}
\medskip

The mode $v_r$ has parity $(-1)^{r-1}$ under the mirror reflection
$k\mapsto q-k$. Thus Eq.~\eqref{eq:forced} requires zero cosine coherence
between modes of opposite mirror parity.
Between distinct modes of equal parity, it requires the same negative
value $-1/N$, regardless of the Bohr frequency.

Two facts prove the lemma.
First, the values in Eq.~\eqref{eq:forced} give $J/N$.
The reflection matrix $\Iop_{kl}=\delta_{k+l,q}$ satisfies
$\sum_r(-1)^rE_r=-\Iop$.
The long-time Ces\`aro average is
\[
D_0:=\lim_{L\to\infty}\frac1L\int_0^L M(t)\,dt
=\sum_r E_r\circ E_r
=\frac{2J+I+\Iop}{2q}
\]
\cite{God13}.
Substitution of these identities into Eq.~\eqref{eq:BCM} gives the target.

Second, the coherence values are unique.
The product-to-sum identity gives
\begin{equation}\label{eq:rankone}
\begin{aligned}
E_r\circ E_s&=q^{-2}w_{rs}w_{rs}^{\mathsf T},\\
w_{rs}&=c_{s-r}-c_{r+s},
\end{aligned}
\end{equation}
where $(c_j)_k=\cos(jk\pi/q)$.
The pairs $(r,s)$ and $(q-s,q-r)$ have the same matrix and Bohr frequency.
Their coherences therefore agree.
After identifying these pairs, the matrices in Eq.~\eqref{eq:rankone}
are linearly independent.
For $0\le a\le q$ and $1\le j\le N$, the Gram identity is
\[
\langle c_a,c_j\rangle
=\frac q2[a=j]-[a\equiv j\pmod2],
\]
where brackets denote indicators.
It converts a vanishing linear combination of the rank-one matrices
into a vanishing principal submatrix of a weighted graph Laplacian.
Its entries force every edge weight to vanish (End Matter).

The algebraic target is constant on the identified pairs.
Comparing it with any uniformly mixing distribution therefore gives
Eq.~\eqref{eq:forced}.
Equivalently, uniform average mixing occurs precisely when the target
coherence vector belongs to the convex hull of the coherence curve
traced by a single readout time.

\textit{Frequency collisions.}
Equation \eqref{eq:forced} prescribes one number per pair of modes,
but a distribution of times only sees Bohr frequencies.
Suppose two pairs share a Bohr frequency,
\begin{equation}\label{eq:collision}
\theta_k-\theta_l=\theta_{k'}-\theta_{l'},\quad l-k\not\equiv l'-k'\pmod 2 .
\end{equation}
Then the common value $\int\cos((\theta_k-\theta_l)t)\,d\mu$ must equal both $-1/N$ and $0$, so no
uniformly mixing distribution exists. We call this a \emph{parity collision}.
For $N=11$, $\theta_1-\theta_5=\theta_3-\theta_6=\sqrt2$ with index differences $4$ and $3$ (Fig.~\ref{fig:collision}).
For $q=15$ one has $\theta_1-\theta_3=\theta_4-\theta_5$, and for $q=21$ one has
$\theta_3-\theta_6=\theta_7-\theta_9$. The first example belongs to a family of collisions for every
$q=6m$ with $m\ge2$, and the other two persist, after rescaling the indices, for all odd multiples of
$15$ and $21$ (End Matter). For $N=11$ the absence of
uniform mixing was first obtained by an explicit dual certificate \cite{Kang26}.

\begin{figure}[t]
\centering
\begin{tikzpicture}[scale=1.65]
  \draw[gray!60] (2,0) arc (0:180:2);
  \draw[->] (-2.25,0) -- (2.35,0);
  \foreach \r in {1,...,11} {
    \fill ({2*cos(15*\r)},{2*sin(15*\r)}) circle (0.8pt);
    \draw[gray!50,densely dotted] ({2*cos(15*\r)},{2*sin(15*\r)}) -- ({2*cos(15*\r)},0);
    \fill ({2*cos(15*\r)},0) circle (0.6pt);
  }
  \foreach \r in {1,3,5,6} {
    \node[above] at ({2*cos(15*\r)},{2*sin(15*\r)}) {\scriptsize $\r$};
  }
  \draw[thick,blue] ({2*cos(75)},-0.18) -- ({2*cos(15)},-0.18);
  \draw[blue] ({2*cos(75)},-0.13) -- ({2*cos(75)},-0.23);
  \draw[blue] ({2*cos(15)},-0.13) -- ({2*cos(15)},-0.23);
  \node[blue,below] at ({(2*cos(75)+2*cos(15))/2},-0.2)
     {\scriptsize $\theta_1-\theta_5=\sqrt2$, $\Delta r=4$};
  \draw[thick,red] (0,-0.62) -- ({2*cos(45)},-0.62);
  \draw[red] (0,-0.57) -- (0,-0.67);
  \draw[red] ({2*cos(45)},-0.57) -- ({2*cos(45)},-0.67);
  \node[red,below] at ({cos(45)},-0.64) {\scriptsize $\theta_3-\theta_6=\sqrt2$, $\Delta r=3$};
\end{tikzpicture}
\caption{Bohr-frequency collision in the chain of $N=11$ spins. The eigenvalues $\theta_r=2\cos(r\pi/12)$
are the projections of equally spaced points on a semicircle. The gaps $\theta_1-\theta_5$ and
$\theta_3-\theta_6$ coincide, but their index differences have opposite parity, so the forced coherences
\eqref{eq:forced} would have to be $-1/11$ and $0$ at the same frequency.}
\label{fig:collision}
\end{figure}

Conversely, every parity collision occurs at one of these moduli.
Writing $\pm\theta_j=\zeta^{x}+\zeta^{-x}$ with
$\zeta=e^{i\pi/q}$, a collision \eqref{eq:collision} becomes a vanishing sum of at most eight $2q$-th
roots of unity in which no two terms are antipodal.
The classification of minimal vanishing sums of few roots of unity, due to Conway and Jones \cite{CJ76} and extended by Poonen and Rubinstein \cite{PR98} in their count of the intersection points of the diagonals of a regular polygon, leaves only components
built from the relations of $3$rd, $5$th and $7$th roots of unity. Reading off which roots of unity can occur gives exactly the moduli of part (ii) of the Theorem.
The exceptional modulus $q=6$ has no parity collision.
Thus the arithmetic that counts crossing diagonals of a regular polygon also locates the spin chains in which uniform mixing is obstructed.

\textit{Construction.}
Suppose that the positive eigenvalues $\theta_r$ with $1\le r<q/2$
are linearly independent over $\Q$.
This occurs exactly when $q$ is a power of two, a prime, or twice a prime
(End Matter).
These are also the moduli for end-to-end PGST \cite{GKSS12}.
Kronecker's theorem \cite{HW08} makes the positive-mode phase flow
dense in its torus (Fig.~\ref{fig:torus}).
For even $q$, the zero-mode phase remains fixed at zero.
Negative-mode phases are the negatives of their positive partners.

For $q\ge4$, consider phases that vary linearly with the
mode index:
\[
\vartheta_a=(q/2-a)\xi
\]
for all modes. Each successive mode changes the phase by
the same amount, $\vartheta_{a+1}-\vartheta_a=-\xi$.
We call this linear progression a \emph{phase ramp}.
This choice respects the spectral pairing and fixes the
zero-mode phase when a zero mode is present.
Along the ramp,
\[
\cos(\vartheta_a-\vartheta_b)=\cos((b-a)\xi).
\]
The coherences therefore depend only on the index differences.

Take $\xi\in\R/4\pi\Z$ with density
\begin{equation}\label{eq:density}
\begin{aligned}
f_\lambda(\xi)&=1-\frac{2\lambda}{N}\sum_{j=1}^{g}\cos(2j\xi),\\
g&=\left\lfloor\frac{N-1}{2}\right\rfloor,
\end{aligned}
\end{equation}
with respect to normalized Haar measure.
For $\lambda>0$, this gives $\lambda$ times the target coherences.
The density is strictly positive when $\lambda<N/(2g)$,
and $N/(2g)>1$.

Haar measure on the phase torus has zero off-diagonal coherences.
Its full support places that vector in the relative interior
of the convex set of phase coherence vectors.
Choose $\lambda>1$ within the positivity range.
The target lies on the segment from the Haar vector to the ramp vector,
with positive weight on the Haar vector.
It therefore lies in the same relative interior.

The coherence vectors obtained from time distributions have the same
closed convex hull as those obtained from phase distributions.
Their relative interiors coincide, so the target is realized by
an actual time distribution.
Carath\'eodory's theorem gives a distribution supported on at most
$\lfloor q/2\rfloor^2+1$ times.
A second convexity argument gives an absolutely continuous distribution
(End Matter).
For $q=3$, a single time $t=\pi/4$ already suffices.
This proves part (i).
The argument proves existence but does not specify the readout times
or a time density.

\begin{figure}[t]
\centering
\begin{tikzpicture}[flow/.style={gray!65,line width=0.3pt}]
  \draw[flow] (0.0000,0.0000) -- (3.3000,1.2605);
  \draw[flow] (0.0000,1.2605) -- (3.3000,2.5210);
  \draw[flow] (0.0000,2.5210) -- (2.0395,3.3000);
  \draw[flow] (2.0395,0.0000) -- (3.3000,0.4815);
  \draw[flow] (0.0000,0.4815) -- (3.3000,1.7420);
  \draw[flow] (0.0000,1.7420) -- (3.3000,3.0024);
  \draw[flow] (0.0000,3.0024) -- (0.7790,3.3000);
  \draw[flow] (0.7790,0.0000) -- (3.3000,0.9629);
  \draw[flow] (0.0000,0.9629) -- (3.3000,2.2234);
  \draw[flow] (0.0000,2.2234) -- (2.8185,3.3000);
  \draw[flow] (2.8185,0.0000) -- (3.3000,0.1839);
  \draw[flow] (0.0000,0.1839) -- (3.3000,1.4444);
  \draw[flow] (0.0000,1.4444) -- (3.3000,2.7049);
  \draw[flow] (0.0000,2.7049) -- (1.5580,3.3000);
  \draw[flow] (1.5580,0.0000) -- (3.3000,0.6654);
  \draw[flow] (0.0000,0.6654) -- (3.3000,1.9259);
  \draw[flow] (0.0000,1.9259) -- (3.3000,3.1863);
  \draw[flow] (0.0000,3.1863) -- (0.2976,3.3000);
  \draw[flow] (0.2976,0.0000) -- (3.3000,1.1468);
  \draw[flow] (0.0000,1.1468) -- (3.3000,2.4073);
  \draw[flow] (0.0000,2.4073) -- (2.3371,3.3000);
  \draw[flow] (2.3371,0.0000) -- (3.3000,0.3678);
  \draw[flow] (0.0000,0.3678) -- (3.3000,1.6283);
  \draw[flow] (0.0000,1.6283) -- (3.3000,2.8888);
  \draw[flow] (0.0000,2.8888) -- (1.0766,3.3000);
  \draw[flow] (1.0766,0.0000) -- (3.3000,0.8493);
  \draw[flow] (0.0000,0.8493) -- (3.3000,2.1098);
  \draw[blue,line width=1.3pt] (0,0) -- (3.3,1.1);
  \draw[blue,line width=1.3pt] (0,1.1) -- (3.3,2.2);
  \draw[blue,line width=1.3pt] (0,2.2) -- (3.3,3.3);
  \draw[line width=0.6pt] (0,0) rectangle (3.3,3.3);
  \draw[->,line width=0.6pt] (1.5,0) -- (1.8,0);
  \draw[->,line width=0.6pt] (1.5,3.3) -- (1.8,3.3);
  \draw[->>,line width=0.6pt] (0,1.5) -- (0,1.8);
  \draw[->>,line width=0.6pt] (3.3,1.5) -- (3.3,1.8);
  \fill[red] (2.475,0.825) circle (2.2pt);
  \node[red,right] at (2.525,0.685) {\scriptsize PGST};
  \node[below] at (0,0) {\scriptsize $0$};
  \node[below] at (3.3,0) {\scriptsize $2\pi$};
  \node[left] at (0,3.3) {\scriptsize $2\pi$};
  \node[below] at (1.65,-0.12) {\small $\vartheta_1$};
  \node[left] at (-0.12,1.65) {\small $\vartheta_2$};
\end{tikzpicture}
\caption{Torus filling for $N=4$ ($q=5$), with opposite sides identified. Gray: the phase flow
$t\mapsto(t\theta_1,t\theta_2)$ modulo $2\pi$, a line of irrational slope $\theta_2/\theta_1$ that fills the
torus. Blue: the phase ramp $\vartheta_a=(\frac52-a)\xi$, a closed curve of slope $\frac13$ carrying the
slope density \eqref{eq:density}. Red: the PGST point $\xi=\pi$.}
\label{fig:torus}
\end{figure}

\textit{Relation to state transfer.}
At $\xi=\pi$, the ramp reaches the PGST phase vector:
$e^{i\vartheta_a}=i^q(-1)^a$.
For the constructive moduli, suitable times therefore give
$U(t_j)\to-i^q\Iop$.
Since $f_\lambda$ has period $\pi$, its phase law is invariant under
translation by this vector.
This invariance cancels the cosine coherences at odd index differences.

A parity collision obstructs the transfer phase.
Its integer vector
$\boldsymbol{\ell}=\mathbf{e}_k-\mathbf{e}_l
-\mathbf{e}_{k'}+\mathbf{e}_{l'}$,
where $\mathbf{e}_j$ denotes the $j$th standard basis vector,
satisfies
\[
\sum_a\ell_a\theta_a=0,\qquad \sum_a\ell_a=0.
\]
Every phase vector $\boldsymbol{\vartheta}$ in the
phase-flow closure satisfies
\[
\sum_a\ell_a\vartheta_a\equiv0\pmod{2\pi}.
\]
At the transfer phase, however,
$\sum_a\ell_a\vartheta_a\equiv\pi\pmod{2\pi}$.
Since $\sum_a\ell_a=0$, adding a global phase cannot
remove this obstruction. Thus the same collision also
excludes end-to-end PGST \cite{GKSS12}.
This restriction concerns the ends only.
The chain of $11$ spins has PGST between sites $2$ and $10$
\cite{CGvB17,vB19}, but admits no uniformly mixing distribution.

\textit{Beyond chains.}
On a bipartite graph, suppose that a sequence $U(t_j)$ converges
to a matrix $Q$ whose entries have modulus $N^{-1/2}$.
The entries of $Q$ are real within each sublattice and purely imaginary
across the two sublattices \cite{GMR17}.
Comparing entries in $AQ=QA$ then shows that all vertex degrees
have the same parity.

A tree with at least two vertices has a leaf.
All its degrees must therefore be odd.
If it has at least three vertices, a leaf count gives two leaves
with a common neighbor.
Their antisymmetric state is stationary.
The corresponding entries of $Q$ force $N\le4$.
It follows that the only trees with instantaneous or
arbitrarily good uniform mixing are $K_1$, $P_2$
and $K_{1,3}$. The star $K_{1,3}$ admits instantaneous
uniform mixing at $t_0=2\pi/\sqrt{27}$
\cite[Sec.~11]{GZ17}.

Cartesian products impose an additional constraint.
For $G\square H$, evolution factorizes as
$U(t)=U_G(t)\otimes U_H(t)$.
Uniform average mixing is equivalent to
\[
\int M_G(t)_{uv}M_H(t)_{u'v'}\,d\mu(t)
=\frac1{|V(G)|\,|V(H)|}
\]
for all entries.
This requires uniform average mixing in each factor and zero covariance
between every pair of factor transition probabilities.

For $G\square G$, the condition forces $M_G(t)=J/|V(G)|$
at almost every sampled time.
Thus a Cartesian square admits uniform average mixing exactly when
its factor admits instantaneous uniform mixing.
In particular, $P_{p-1}\square P_{p-1}$ admits no uniformly mixing
distribution for any prime $p\ge5$.
Each factor $P_{p-1}$ does admit one.

\textit{Outlook.}
Random timing turns uniform spreading into a problem about Bohr frequencies.
For paths, we obtain arithmetic existence criteria and a complete
classification of parity-collision obstructions.
The moduli outside the two families in the Theorem remain unresolved here.

A next step is to study other networks and Hamiltonians.
The forced-coherence argument requires an injective map from the relevant
moments to the averaged transition matrix.
For example, adding $\Delta\sum_{u=1}^{N-1}\sigma^z_u\sigma^z_{u+1}$
changes the single-excitation Hamiltonian to
\[
A+2\Delta(e_1e_1^{\mathsf T}+e_Ne_N^{\mathsf T})
\]
up to a scalar matrix.
For $N\ge3$ and nonzero $\Delta$, these boundary terms remove the anticommutation
with the sublattice signature used here.

Coupled photonic waveguides \cite{Perets08} and engineered spin or qubit
arrays \cite{Bloch08,Majer07} provide settings for controlled coherent
transport.
Our results identify infinite families of chain lengths for which
random readout can, or cannot, make the site distribution exactly uniform.

\begin{acknowledgments}
The author is supported by the National Research Foundation of Korea, Grant ID A0424-20260100, and by the
Hyunsong Scholarship.
\end{acknowledgments}

\section*{End Matter}

\textit{Target coherences.}
Put $D_0=\sum_rE_r\circ E_r$ and
$\mathcal E=\sum_{r<s,\ s-r\text{ even}}E_r\circ E_s$.
Reflection of the sine eigenvectors gives $\sum_r(-1)^rE_r=-\Iop$.
Consequently
\[
\sum_{r,s}E_r\circ E_s=I,\quad
\sum_{r,s}(-1)^{r+s}E_r\circ E_s=\Iop.
\]
Selecting equal-parity indices yields
$D_0+2\mathcal E=\frac12(I+\Iop)$.
Since $D_0=(2J+I+\Iop)/(2q)$ \cite{God13}, the target gives
\[
\Mh=D_0-\frac2N\mathcal E
=\frac qN D_0-\frac1{2N}(I+\Iop)
=\frac1N J.
\]

\textit{Uniqueness of the coherences.}
Equation~\eqref{eq:rankone} follows from
$2\sin\alpha\sin\beta=\cos(\alpha-\beta)-\cos(\alpha+\beta)$.
Represent each class $\{(k,l),(q-l,q-k)\}$ by a pair with $c=k+l\le q$.
Put $b=l-k$.
Then $1\le b<c\le q$ and $b\equiv c\pmod2$.
Distinct classes give distinct pairs $(b,c)$.

Geometric sums give
\[
\langle c_a,c_j\rangle
=\frac q2[a=j]-[a\equiv j\pmod2]
\]
for $0\le a\le q$ and $1\le j\le N$.
For $w_{bc}=c_b-c_c$, the parity terms cancel:
\[
\langle w_{bc},c_j\rangle=\frac q2([b=j]-[c=j]).
\]
Suppose that $\sum z_{bc}w_{bc}w_{bc}^{\mathsf T}=0$.
Pairing with $c_j$ and $c_{j'}$ makes the principal submatrix
on $\{1,\dots,N\}$ of a weighted graph Laplacian vanish.
The graph has vertices $\{1,\dots,q\}$ and edge weights $z_{bc}$.
Off-diagonal entries first give $z_{bc}=0$ for $c\le N$.
The remaining diagonal entries give $z_{bq}=0$.

Every time distribution has coherences constant on the identified classes.
The target in Eq.~\eqref{eq:forced} has this property as well.
Subtracting the target expansion from a uniformly mixing expansion
and applying independence proves that the two coherence vectors agree.

\textit{Collisions.}
If $6\mid q$ and $q\ge12$, put $h=q/3$.
Then $h$ is even and $h\ge4$.
For $1\le k\le h/2$,
\[
\begin{aligned}
\theta_k-\theta_{k+h}
&=4\sin\frac{(2k+h)\pi}{2q}\sin\frac{\pi}{6}\\
&=\theta_{h-k}-\theta_{q/2}.
\end{aligned}
\]
The index differences are $h$ and $h/2+k$.
Choose $k\in\{1,2\}$ so that the second is odd.

For odd $a$, the cases $q=15a$ and $q=21a$ give
\[
\begin{aligned}
&(\theta_a-\theta_{3a})-(\theta_{4a}-\theta_{5a})\\
&\quad=2\cos\frac{2\pi}{5}-2\cos\frac{\pi}{5}+1=0,\\
&(\theta_{3a}-\theta_{6a})-(\theta_{7a}-\theta_{9a})\\
&\quad=2\left(\cos\frac{\pi}{7}-\cos\frac{2\pi}{7}
+\cos\frac{3\pi}{7}\right)-1=0.
\end{aligned}
\]
The respective index differences are $(2a,a)$ and $(3a,2a)$.
Even cofactors already fall under the multiples of $6$.

Conversely, no two signed terms of a parity collision can cancel.
Such a cancellation would identify the two pairs up to reflection,
giving equal index differences.
In particular, at most one signed term is zero.
Discard that term if present.
Writing each remaining term as $\zeta^x+\zeta^{-x}$ with
$\zeta=e^{i\pi/q}$ gives a relation of weight $8$ or $6$.
An antipodal pair of roots would force either a discarded zero term
or two cancelling signed terms.
Thus no antipodal pair remains.

The classification of minimal vanishing sums of weight at most $8$
\cite{CJ76,PR98} leaves the types
\[
\begin{gathered}
R_3,\ R_5,\ R_7,\ (R_5{:}R_3),\ (R_5{:}2R_3),\\
(R_5{:}3R_3),\ (R_7{:}R_3).
\end{gathered}
\]
Here $R_p$ is the relation formed by all $p$th roots of unity.
In $(R_p{:}jR_3)$, each of $j$ distinct terms $\eta$ is replaced
by $-\eta\zeta_3$ and $-\eta\zeta_3^2$.

A rotated $R_p$ within the $2q$th roots requires $p\mid2q$.
Each mixed type contains two roots with ratio of order $6p$.
The possible component weights are $8$, $5+3$, $6$ and $3+3$.
The first three cases give $15\mid q$ or $21\mid q$.
The last gives $3\mid q$ and total weight $6$.
A zero signed term was therefore discarded, so one index is $q/2$.
Hence $q$ is even and $6\mid q$.
For $q=6$, the odd-difference gaps are
$\sqrt3-1,1,\sqrt3+1$, whereas the even-difference gaps are
$\sqrt3,2,2\sqrt3$.
These sets are disjoint.

\textit{Independence of the eigenvalues.}
Put $\cR^\times=\{r:1\le r<q/2\}$ and
$D=[\Q(\theta_1):\Q]=\frac12\varphi(2q)$.
Independence requires $|\cR^\times|\le D$.
For odd $q$, this gives $\varphi(q)\ge q-1$, so $q$ is prime.
For $q=2c$ with $c\ge3$ odd, it gives $\varphi(c)\ge c-1$,
so $c$ is prime.
For $q=2^ac$ with $a\ge2$ and $c$ odd, it gives
$2^{a-1}(c-\varphi(c))\le1$, so $c=1$.

Conversely, for $r\ge1$, each $\theta_r$ is a monic integer polynomial
of degree $r$ in $\theta_1$.
Thus $1,\theta_1,\dots,\theta_{D-1}$ form a basis of $\Q(\theta_1)$.
This proves independence when $q=2^a$ because $D=|\cR^\times|+1$.
For $q=p$ or $q=2p$ with $p$ an odd prime, $D=|\cR^\times|$.
Pairing conjugate $p$th roots gives
\[
1=-\sum_{j=1}^{(p-1)/2}\theta_{2qj/p}.
\]
The identity $\theta_{q-r}=-\theta_r$ puts this sum in the span
of the positive eigenvalues.
They therefore span the field and are independent.

\textit{Construction.}
Assume one of these independent-spectrum moduli and $q\ge4$.
Let $\cR=\{1,\dots,\lfloor q/2\rfloor\}$.
Kronecker's theorem \cite{HW08} identifies the phase-flow closure
with $\cT=\{\vartheta\in\T^{\cR}:\vartheta_{q/2}=0\}$.
The coordinate condition is omitted for odd $q$.
Extend phases to all modes by $\vartheta_{q-a}=-\vartheta_a$ and put
\[
Y(\vartheta)=(\cos(\vartheta_a-\vartheta_b))_{a<b}.
\]
Let $\cK$ be the convex hull of the time coherence curve.
The compact set $\cK_{\cT}=\conv Y(\cT)$ is its closure.
These two convex sets have the same relative interior \cite{Rock70}.

Haar measure on $\cT$ has coherence vector $0$.
Full support puts this vector in $\relint\cK_{\cT}$.
The ramp with density \eqref{eq:density} realizes $\lambda Y^\star$,
where $Y^\star$ is the target.
Choose $1<\lambda<N/(2g)$.
Then
\[
Y^\star=(1-\lambda^{-1})\,0+\lambda^{-1}(\lambda Y^\star)
\]
belongs to $\relint\cK_{\cT}$ and hence to $\cK$.

Write the phase moments as a symmetric pair $(C,S)$ indexed by $\cR$.
They satisfy $C_{rr}+S_{rr}=1$.
Their affine space has dimension at most $|\cR|^2$,
and $Y$ is an affine function of the pair.
Carath\'eodory's theorem therefore gives at most
$\lfloor q/2\rfloor^2+1$ readout times.

The coherence vectors of absolutely continuous time distributions
also form a convex set.
Uniform laws on short intervals approximate every point of the time curve.
Their convex set consequently has closure $\cK_{\cT}$ and the same
relative interior.
It therefore contains $Y^\star$.
For $q=3$, both the time $\pi/4$ and the uniform law on $[0,\pi]$ work.

\textit{Degree parity and trees.}
Suppose that a bipartite graph satisfies $M(t_j)\to J/N$.
Compactness of the unitary group gives a subsequence with $U(t_j)\to Q$.
Then $|Q_{uv}|^2=1/N$, $AQ=QA$, and
$\Sigma Q\Sigma=\overline Q$.
Thus $\widehat Q=\sqrt N\,Q$ has entries $\pm1$ within sublattices
and $\pm i$ across them \cite{GMR17}.
The $(a,b)$ entries in $A\widehat Q=\widehat QA$ equate sums
of $\deg a$ and $\deg b$ terms in the same set $\{\omega,-\omega\}$,
where $\omega\in\{1,i\}$.
Division by $\omega$ and reduction modulo $2$ give equal degree parity.

For a tree with at least two vertices, every degree is therefore odd.
If it has at least three vertices, counting leaves gives two leaves
$u,w$ with a common neighbor.
For $x=e_u-e_w$, one has $Ax=0$ and hence $Qx=x$.
Thus $Q_{uu}-Q_{uw}=1$.
The entry moduli imply $1\le2/\sqrt N$, so $N\le4$.
The only candidates are $K_1$, $P_2$ and $K_{1,3}$,
and each has instantaneous uniform mixing.

\textit{Products.}
The identity $M_{G\square H}(t)=M_G(t)\otimes M_H(t)$ gives
the stated criterion for products.
Summing it over $v'$ yields $\int M_G(t)_{uv}\,d\mu=1/|V(G)|$.
For $H=G$ and $(u',v')=(u,v)$, it also gives
\[
\int\left(M_G(t)_{uv}-\frac1{|V(G)|}\right)^2d\mu=0.
\]
Every entry is therefore uniform almost surely.
Since there are finitely many entries, $M_G(t)=J/|V(G)|$
holds simultaneously for $\mu$-almost every $t$.
Conversely, a point mass at an instantaneous uniform mixing time
works for $G\square G$.

\clearpage
\setcounter{section}{0}
\setcounter{equation}{0}
\setcounter{theorem}{0}
\setcounter{secnumdepth}{2}
\renewcommand{\theHequation}{S.\arabic{equation}}
\renewcommand{\thesection}{S\arabic{section}}
\renewcommand{\thesubsection}{\thesection.\arabic{subsection}}
\renewcommand{\theequation}{S\arabic{equation}}

\begin{center}
\textbf{\large Supplemental Material}
\end{center}

This Supplemental Material gives complete proofs of the results stated in the Letter.
We keep the notation of the Letter.
Equations and statements carry the prefix S(upplemental),
and Figs.~\ref{fig:collision} and \ref{fig:torus} refer to the Letter.

\section{Preliminaries}\label{S:sec:prelim}

\subsection{Quantum walks and averaged mixing}

Let $A$ be a real symmetric $n\times n$ matrix.
A graph adjacency matrix is one example.
Put
\begin{equation}
U(t)=e^{itA},\quad M(t)=U(t)\circ\overline{U(t)},
\end{equation}
where $\circ$ denotes the entrywise product.
The entry $M(t)_{uv}$ is the probability of finding at $u$
a walker started at $v$.
Unitarity makes $M(t)$ doubly stochastic.
For a Borel probability measure $\mu$ on $\R$, put
$\Mh[\mu]=\int_\R M(t)\,d\mu(t)$.
Throughout, $I$ and $J$ denote the identity and all-ones matrices.

\begin{definition}
The matrix $A$ admits \emph{uniform average mixing} if
$\Mh[\mu]=J/n$ for some Borel probability measure $\mu$.
It admits \emph{instantaneous uniform mixing} if $M(t_0)=J/n$
for some $t_0$.
It admits \emph{$\epsilon$-uniform mixing} if
\[
\inf_{t\in\R}\max_{u,v}\left|M(t)_{uv}-\frac1n\right|=0.
\]
Thus $\epsilon$-uniform mixing means approximation to arbitrary accuracy.
For a graph, we use its adjacency matrix.
\end{definition}

Since $A$ is real, $U(-t)=\overline{U(t)}$ and $M(-t)=M(t)$.
Pushing a time measure forward under $t\mapsto|t|$ therefore preserves
the averaged mixing matrix.
This also preserves finite support and absolute continuity.
Thus all existence statements can be realized with nonnegative readout times.

Let $A=\sum_r\theta_rE_r$ be the spectral decomposition with distinct
eigenvalues $\theta_r$.
The idempotents $E_r$ are real, and $U(t)=\sum_re^{it\theta_r}E_r$.
Pairing the terms indexed by $(r,s)$ and $(s,r)$ gives
\cite[Proposition~4]{BCM24}
\begin{equation}\label{S:eq:BCM}
\Mh[\mu]=\sum_rE_r\circ E_r+2\sum_{r<s}y_{rs}\,E_r\circ E_s,
\end{equation}
where the \emph{coherences} are
\begin{equation}\label{S:eq:coh}
y_{rs}=\int_\R\cos\bigl((\theta_r-\theta_s)t\bigr)\,d\mu(t).
\end{equation}
Thus $\Mh[\mu]$ depends only on the averaged cosines of the
Bohr frequencies $\theta_r-\theta_s$.

The uniformly coupled $XY$ Hamiltonian
$\hat H=\frac12\sum_{u=1}^{N-1}
(\sigma^x_u\sigma^x_{u+1}+\sigma^y_u\sigma^y_{u+1})$
preserves excitation number.
Its restriction to the single-excitation subspace is the adjacency
matrix of $P_N$ \cite{Christandl04,GKSS12}.
A time distribution makes the single-excitation populations uniform
for every initial site exactly when it gives uniform average mixing
on $P_N$.

\subsection{Bipartite matrices and moment pairs}

\begin{lemma}\label{S:lem:bip}
Let $A$ be a real symmetric $n\times n$ matrix.
Let $\Sigma=\operatorname{diag}(\sigma_u)$ with $\sigma_u\in\{1,-1\}$,
and assume $\Sigma A\Sigma=-A$.
This holds for a bipartite graph with its sublattice signature.
Write $A=\sum_{r=1}^d\theta_rE_r$ with $\theta_1>\dots>\theta_d$.
Put $\cR_+=\{r:\theta_r\ge0\}$ and define
\[
w(u,v)_r=
\begin{cases}
2(E_r)_{uv},&\theta_r>0,\\
(E_r)_{uv},&\theta_r=0.
\end{cases}
\]
Then $\theta_{d+1-r}=-\theta_r$ and $E_{d+1-r}=\Sigma E_r\Sigma$.
For every $t$,
\[
M(t)_{uv}=
\begin{dcases}
\left(\sum_{r\in\cR_+}w(u,v)_r\cos(t\theta_r)\right)^2,
&\sigma_u\sigma_v=1,\\
\left(\sum_{r\in\cR_+}w(u,v)_r\sin(t\theta_r)\right)^2,
&\sigma_u\sigma_v=-1.
\end{dcases}
\]
Define the symmetric moment matrices by
\[
\begin{aligned}
C[\mu]_{rs}&=\int_\R\cos(t\theta_r)\cos(t\theta_s)\,d\mu(t),\\
S[\mu]_{rs}&=\int_\R\sin(t\theta_r)\sin(t\theta_s)\,d\mu(t).
\end{aligned}
\]
Then
\begin{equation}\label{S:eq:pairing}
\Mh[\mu]_{uv}=
\begin{cases}
w(u,v)^{\mathsf T}C[\mu]w(u,v),&\sigma_u\sigma_v=1,\\
w(u,v)^{\mathsf T}S[\mu]w(u,v),&\sigma_u\sigma_v=-1.
\end{cases}
\end{equation}
\end{lemma}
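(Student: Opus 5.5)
We begin with the spectral symmetry. From $\Sigma A\Sigma=-A$ and $\Sigma^2=I$, the matrix $A$ is orthogonally similar to $-A$. Hence if $E$ is the spectral idempotent of $A$ for $\theta$, then $\Sigma E\Sigma$ is the spectral idempotent for $-\theta$: it is a symmetric idempotent and $A\,\Sigma E\Sigma=-\Sigma AE\Sigma=-\theta\,\Sigma E\Sigma$, and summing over all eigenvalues recovers $I$. The spectrum is therefore symmetric about $0$. Since the $\theta_r$ are listed in decreasing order, $\theta_{d+1-r}=-\theta_r$ and $E_{d+1-r}=\Sigma E_r\Sigma$. In particular $0$ is an eigenvalue exactly when $d$ is odd, and then its index is $(d+1)/2$. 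For the bipartite graph case, take $\Sigma$ to be $+1$ on one side and $-1$ on the other; conjugation then flips the sign of every edge entry.

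Next we expand $U(t)_{uv}=\sum_r e^{it\theta_r}(E_r)_{uv}$ and group each positive eigenvalue with its negative partner. Entrywise we have $(\Sigma E_r\Sigma)_{uv}=\sigma_u\sigma_v(E_r)_{uv}$, so the pair for $\theta_r>0$ contributes $(E_r)_{uv}\bigl(e^{it\theta_r}+\sigma_u\sigma_v e^{-it\theta_r}\bigr)$. When $\sigma_u\sigma_v=1$ this equals $w(u,v)_r\cos(t\theta_r)$. When $\sigma_u\sigma_v=-1$ it equals $i\,w(u,v)_r\sin(t\theta_r)$. For the zero eigenvalue (if present) the term is just $(E_r)_{uv}$, which agrees with $w(u,v)_r\cos(0)$. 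Moreover, when $\sigma_u\sigma_v=-1$ we have $(E_0)_{uv}=-(E_0)_{uv}$, so this entry vanishes and $w(u,v)_0\sin(0)=0$ is consistent. This is the only bookkeeping point that needs care: the zero mode must be included with weight $1$ rather than $2$.

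It follows that $U(t)_{uv}$ equals the real sum $\sum_{r\in\cR_+}w_r\cos(t\theta_r)$ in the first case and $i$ times the real sum $\sum_{r\in\cR_+}w_r\sin(t\theta_r)$ in the second. The $E_r$ are real, so both sums are real. Taking the squared modulus gives the stated formula for $M(t)_{uv}$. Finally, we expand the square as the quadratic form $\sum_{r,s}w_rw_s\cos(t\theta_r)\cos(t\theta_s)$ (respectively with sines). The integrands are bounded and the sum is finite, so we may integrate term by term against $\mu$. This yields \eqref{S:eq:pairing} with the symmetric matrices $C[\mu]$ and $S[\mu]$.
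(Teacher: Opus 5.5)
Your proof is correct and follows the paper's argument essentially step for step. You conjugate by $\Sigma$ to pair $\theta_r$ with $-\theta_r$ and obtain $(E_{d+1-r})_{uv}=\sigma_u\sigma_v(E_r)_{uv}$; you group the paired terms of $U(t)_{uv}$ into $2\cos(t\theta_r)$ or $2i\sin(t\theta_r)$; you use $\Sigma E_r\Sigma=E_r$ to make the zero-mode entry vanish across sublattices; and you then square and integrate. Your extra details, namely why $\Sigma E\Sigma$ is the full idempotent for $-\theta$ and the termwise integration of a finite bounded sum, only make explicit what the paper leaves implicit.
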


\begin{proof}
The map $\Sigma$ sends the $\theta_r$-eigenspace onto the
$(-\theta_r)$-eigenspace.
This gives the spectral symmetry and
$(E_{d+1-r})_{uv}=\sigma_u\sigma_v(E_r)_{uv}$.

In the expansion of $U(t)_{uv}$, a positive eigenvalue and its negative
partner contribute
\[
\bigl(e^{it\theta_r}+\sigma_u\sigma_v e^{-it\theta_r}\bigr)(E_r)_{uv}.
\]
This equals $2\cos(t\theta_r)(E_r)_{uv}$ when $\sigma_u\sigma_v=1$,
and $2i\sin(t\theta_r)(E_r)_{uv}$ otherwise.
For a zero eigenvalue, $\Sigma E_r\Sigma=E_r$.
Its entry $(E_r)_{uv}$ therefore vanishes when $\sigma_u\sigma_v=-1$.

Thus $U(t)_{uv}$ is real or purely imaginary according to the
sublattice parity.
Taking its squared modulus gives the formula for $M(t)_{uv}$.
Integration gives \eqref{S:eq:pairing}.
\end{proof}

A \emph{symmetric pair} $(C,S)$ consists of two real symmetric matrices
indexed by $\cR_+$.
Define $\bM(C,S)$ by the right-hand side of \eqref{S:eq:pairing}.
The map $\bM$ is linear.
Put
\begin{equation}\label{S:eq:F}
F(t)=\left(
(\cos(t\theta_r)\cos(t\theta_s))_{r,s},
(\sin(t\theta_r)\sin(t\theta_s))_{r,s}
\right).
\end{equation}
Then $(C[\mu],S[\mu])=\int F\,d\mu$ and
$\Mh[\mu]=\bM(C[\mu],S[\mu])$.
In particular, $\bM(F(t))=M(t)$.
Writing $m=|\cR_+|$, the space of symmetric pairs has dimension $m(m+1)$.
The $m$ independent equations $C_{rr}+S_{rr}=1$ define an affine
space of dimension $m^2$ containing $F(\R)$.

\begin{lemma}\label{S:lem:bary}
Let $\mathcal X$ be a metric space and
$\Psi:\mathcal X\to\R^k$ a bounded continuous map.
For every Borel probability measure $\nu$ on $\mathcal X$,
\[
\int\Psi\,d\nu\in\conv\Psi(\mathcal X).
\]
If $\nu$ has full support, this barycenter belongs to
$\relint\conv\Psi(\mathcal X)$.
\end{lemma}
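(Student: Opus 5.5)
Write $K=\conv\Psi(\mathcal X)$ and $b=\int\Psi\,d\nu$; the two claims are handled by supporting-hyperplane separation.

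\emph{Membership.} Since $\Psi$ is bounded, $K$ is bounded and $\overline K$ is compact and convex. Suppose $b\notin\overline K$. Then some linear functional $\ell$ satisfies $\ell(b)>\sup_{x}\ell(\Psi(x))$. Integrating gives $\ell(b)=\int\ell\circ\Psi\,d\nu\le\sup\ell\circ\Psi$, a contradiction, so $b\in\overline K$. To obtain $b\in K$ itself we argue by induction on the dimension of $\operatorname{aff}\Psi(\mathcal X)$. If $b\notin\relint\overline K=\relint K$, a nontrivial supporting functional $\ell$ (nonconstant on $\operatorname{aff}K$) attains its maximum over $\overline K$ at $b$. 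Then $\ell\circ\Psi\le\ell(b)$ everywhere while $\int\ell\circ\Psi\,d\nu=\ell(b)$. Hence $\nu$ is concentrated on the closed set $\mathcal X'=\{x:\ell(\Psi(x))=\ell(b)\}$, whose image lies in the face $\overline K\cap\{\ell=\ell(b)\}$. This face has affine hull of strictly smaller dimension. Restricting $\nu$ to $\mathcal X'$ (a probability measure there) and applying the induction hypothesis gives $b\in\conv\Psi(\mathcal X')\subseteq K$. The base case of dimension $0$ is trivial, since $\Psi$ is then constant.

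\emph{Relative interior.} Under full support, suppose $b\notin\relint K$. The same supporting functional $\ell$ gives $\nu(\{\ell\circ\Psi<\ell(b)\})=0$. That set is open by continuity, so full support forces it to be empty. Thus $\ell\circ\Psi\equiv\ell(b)$, and $\ell$ is constant on $\Psi(\mathcal X)$, hence on its affine hull. This contradicts the choice of $\ell$ as nonconstant on $\operatorname{aff}K$.

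The main point needing care is the first part: showing $b$ lies in the convex hull itself rather than only its closure. I would handle it by the face-reduction induction above, which uses only that the sets $\mathcal X'$ are closed and therefore Borel.
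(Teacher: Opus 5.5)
Your proof is correct and follows essentially the same route as the paper. Strict separation places the barycenter in the closure. At a relative-boundary point, a supporting functional that is nonconstant on the affine hull concentrates $\nu$ on a closed set. Full support then rules this out, giving the relative-interior claim. Induction on the affine dimension of $\Psi(\mathcal X)$ upgrades membership from the closure to $\conv\Psi(\mathcal X)$ itself.
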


\begin{proof}
Put $x=\int\Psi\,d\nu$ and
$\mathcal C=\operatorname{cl}\conv\Psi(\mathcal X)$.
If $x\notin\mathcal C$, strict separation gives a vector $a$ with
\[
\langle a,x\rangle>\sup_{z\in\mathcal X}\langle a,\Psi(z)\rangle.
\]
Integration contradicts this inequality.
Hence $x\in\mathcal C$.

The relative interiors of a convex set and its closure agree
\cite[Theorem~6.3]{Rock70}.
Both claims follow immediately if $x\in\relint\mathcal C$.
Otherwise, a supporting hyperplane in the affine hull of $\mathcal C$
gives a linear functional $\langle a,\cdot\rangle$ that is nonconstant
on $\mathcal C$ and satisfies
\[
\langle a,y\rangle\le\langle a,x\rangle
\quad(y\in\mathcal C)
\]
\cite[Theorem~11.6]{Rock70}.
The continuous function $h(z)=\langle a,\Psi(z)-x\rangle$
is nonpositive and has integral zero.
Thus $\nu(\mathcal X')=1$ for the closed set $\mathcal X'=\{h=0\}$.

If $\nu$ has full support, continuity forces $h=0$ everywhere.
This contradicts nonconstancy of the functional on $\mathcal C$.
It proves the full-support assertion.

For the first assertion, induct on the affine dimension of
$\Psi(\mathcal X)$.
Dimension zero is immediate.
The image $\Psi(\mathcal X')$ lies in the supporting hyperplane,
so it has smaller affine dimension.
Apply the induction hypothesis to $\Psi|_{\mathcal X'}$
and the restriction of $\nu$.
It gives
\[
x\in\conv\Psi(\mathcal X')\subseteq\conv\Psi(\mathcal X).
\]
\end{proof}

\begin{theorem}\label{S:thm:reduction}
Let $A$ be as in Lemma~\ref{S:lem:bip}.
\begin{enumerate}[label=(\arabic*),leftmargin=2em,itemsep=1pt,topsep=2pt]
\item
The matrix $A$ admits uniform average mixing if and only if
$\bM(C,S)=J/n$ for some $(C,S)\in\conv F(\R)$.
\item
Let $\mathcal A$ be an affine space containing $F(\R)$,
and suppose that $\bM$ is injective on $\mathcal A$.
Assume that $(C^\star,S^\star)\in\mathcal A$ satisfies
$\bM(C^\star,S^\star)=J/n$.
Then every uniformly mixing measure has this moment pair.
Moreover, uniform average mixing occurs if and only if
$(C^\star,S^\star)\in\conv F(\R)$.
\end{enumerate}
\end{theorem}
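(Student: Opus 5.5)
The plan is to reduce both parts to the identity $\Mh[\mu]=\bM(C[\mu],S[\mu])$ together with the fact that the moment pair of a measure is the barycenter $\int F\,d\mu$. Lemma~\ref{S:lem:bip} provides both.

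\emph{Part (1).} Suppose first that $\mu$ is uniformly mixing. Then $(C[\mu],S[\mu])=\int F\,d\mu$, and $\bM$ of this pair is $J/n$. It remains to see that the barycenter lies in $\conv F(\R)$. For this I apply Lemma~\ref{S:lem:bary} with $\mathcal X=\R$ and $\Psi=F$, which is bounded and continuous, and $\nu=\mu$. Conversely, suppose $(C,S)=\sum_j\mu_jF(t_j)$ is a finite convex combination. Take $\mu=\sum_j\mu_j\delta_{t_j}$. Then $(C[\mu],S[\mu])=(C,S)$, and linearity of $\bM$ gives $\Mh[\mu]=\bM(C,S)=J/n$.

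\emph{Part (2).} Let $\mu$ be any uniformly mixing measure.

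1. The pair $(C[\mu],S[\mu])$ lies in $\mathcal A$. By part (1) it lies in $\conv F(\R)$, and an affine space containing $F(\R)$ contains its convex hull.
2. Both $(C[\mu],S[\mu])$ and $(C^\star,S^\star)$ lie in $\mathcal A$ and are sent by $\bM$ to $J/n$.
3. Injectivity of $\bM$ on $\mathcal A$ then forces $(C[\mu],S[\mu])=(C^\star,S^\star)$.

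The criterion now follows in both directions. If uniform mixing occurs, the pair just identified is $(C^\star,S^\star)$, and it lies in $\conv F(\R)$ by part (1). If $(C^\star,S^\star)\in\conv F(\R)$, part (1) applied to that pair produces a uniformly mixing measure.

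The main point requiring care is that the barycenter of an arbitrary Borel measure lies in the convex hull itself, not merely in its closure. This is exactly the first assertion of Lemma~\ref{S:lem:bary}, so no new obstacle arises. The rest is linearity and the injectivity hypothesis.
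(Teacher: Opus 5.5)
Your proposal is correct and follows the paper's proof essentially step for step. In part (1) you place the barycenter $\int F\,d\mu$ in $\conv F(\R)$ via Lemma~\ref{S:lem:bary} and use a finite atomic measure for the converse. In part (2) you use $\conv F(\R)\subseteq\mathcal A$ and injectivity, then read off the final equivalence from part (1). This is exactly the paper's argument.
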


\begin{proof}
(1)
For a uniformly mixing measure, Lemma~\ref{S:lem:bary} puts
$\int F\,d\mu$ in $\conv F(\R)$.
Its image under $\bM$ is $J/n$.
Conversely, a finite convex combination $\sum_j\mu_jF(t_j)$
with image $J/n$ gives the measure $\sum_j\mu_j\delta_{t_j}$.

(2)
The moment pair of a uniformly mixing measure belongs to
$\conv F(\R)\subseteq\mathcal A$.
It has the same image as the target, so injectivity makes
the two pairs equal.
The final equivalence follows from (1).
\end{proof}

Theorem~\ref{S:thm:reduction} reformulates \eqref{S:eq:BCM}
as a finite-dimensional convex problem.
Injectivity of $\bM$ makes the moments of every uniformly mixing
measure unique.
We now establish this property for paths.

\section{Forced coherences on paths}\label{S:sec:forced}

\subsection{Spectrum and target}

Let $q\ge3$ and $N=q-1$.
Let $A$ be the adjacency matrix of $P_N$ on vertices $1,\dots,N$.
Its eigenvalues and normalized eigenvectors are
\begin{equation}\label{S:eq:spec}
\theta_r=2\cos\frac{r\pi}{q},
\quad (v_r)_k=\sqrt{\frac2q}\sin\frac{rk\pi}{q},
\end{equation}
for $1\le r,k\le N$.
The spectral idempotents are $E_r=v_rv_r^{\mathsf T}$.
Let $\Iop_{k\ell}=\delta_{k+\ell,q}$ be the reversal matrix.
The identities
\[
\theta_{q-r}=-\theta_r,\quad
(v_{q-r})_k=(-1)^{k+1}(v_r)_k
\]
give Lemma~\ref{S:lem:bip} with $d=N$ and $\sigma_k=(-1)^{k+1}$.
The nonnegative and positive mode indices are respectively
\[
\cR=\{1,\dots,\lfloor q/2\rfloor\},
\quad
\cR^\times=\{1,\dots,\lceil q/2\rceil-1\}.
\]

\begin{lemma}\label{S:lem:alt}
$\sum_{r=1}^N(-1)^rE_r=-\Iop$, and consequently
\[
\sum_{r\equiv s\ (\mathrm{mod}\ 2)}E_r\circ E_s=\frac12(I+\Iop).
\]
\end{lemma}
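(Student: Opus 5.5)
The plan is to establish the first identity directly from the reflection symmetry of the sine eigenvectors, and then to obtain the second by expanding two quadratic identities in the idempotents entrywise.

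\emph{Step 1: the alternating sum.} By the identity $(v_{q-r})_k=(-1)^{k+1}(v_r)_k$ recorded above, or equivalently by $\sin\bigl(r(q-k)\pi/q\bigr)=(-1)^{r+1}\sin(rk\pi/q)$, we have $\Iop v_r=(-1)^{r+1}v_r$. This is exactly the mirror parity statement. Since the $v_r$ form an orthonormal basis and $\Iop$ is diagonal in that basis, we get
\[
\Iop=\sum_r(-1)^{r+1}v_rv_r^{\mathsf T}=-\sum_r(-1)^rE_r .
\]
This proves the first claim. The only check needed is the sign $(-1)^{r+1}$, which comes from $\sin(r\pi-x)=(-1)^{r+1}\sin x$.

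\emph{Step 2: two entrywise squares.} The entrywise product is bilinear, so
\[
\sum_{r,s}E_r\circ E_s=\Bigl(\sum_rE_r\Bigr)\circ\Bigl(\sum_sE_s\Bigr)=I\circ I=I,
\]
using $\sum_rE_r=I$. Likewise,
\[
\sum_{r,s}(-1)^{r+s}E_r\circ E_s=(-\Iop)\circ(-\Iop)=\Iop\circ\Iop=\Iop .
\]
The last equality holds because $\Iop$ is a $0/1$ matrix.

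\emph{Step 3: select equal parities.} The weight $\frac12\bigl(1+(-1)^{r+s}\bigr)$ equals $1$ when $r\equiv s\pmod 2$ and $0$ otherwise. Averaging the two identities of Step 2 therefore gives
\[
\sum_{r\equiv s}E_r\circ E_s=\frac12(I+\Iop).
\]

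There is no real obstacle here. The only care needed is the sign bookkeeping in Step 1, together with remembering that the double sum runs over ordered pairs, diagonal terms included.
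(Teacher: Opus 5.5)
Your proof is correct and follows the paper's argument exactly. The first identity comes from the mirror parity $\Iop v_r=(-1)^{r-1}v_r$ together with orthonormality, and the second is the half-sum of $\sum_{r,s}E_r\circ E_s=I$ and $\sum_{r,s}(-1)^{r+s}E_r\circ E_s=\Iop$; you spell out the bilinearity and $\Iop\circ\Iop=\Iop$ steps that the paper leaves implicit.

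One small precision: the recorded identity $(v_{q-r})_k=(-1)^{k+1}(v_r)_k$ reflects the mode index, not the site index. It yields $(v_r)_{q-k}=(-1)^{r+1}(v_r)_k$ only after using the symmetry $(v_r)_k=(v_k)_r$ of the sine kernel. So your direct computation $\sin(r\pi-x)=(-1)^{r+1}\sin x$ is the cleaner justification.
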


\begin{proof}
Reflection of the sine eigenvector gives
\[
(\Iop v_r)_k=(v_r)_{q-k}=(-1)^{r-1}(v_r)_k.
\]
Since the eigenvectors form an orthonormal basis,
$\Iop=\sum_r(-1)^{r-1}E_r$.
For the second identity, use
\[
\sum_{r,s}E_r\circ E_s=I,\quad
\sum_{r,s}(-1)^{r+s}E_r\circ E_s=\Iop.
\]
Their half-sum selects pairs of equal parity.
\end{proof}

\begin{proposition}[{\cite[Lemma~4.3]{God13}}]\label{S:prop:D0}
\[
D_0:=\sum_rE_r\circ E_r=\frac{2J+I+\Iop}{2q}.
\]
\end{proposition}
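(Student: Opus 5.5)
The plan is a direct computation of each entry of $D_0$ from the sine eigenvectors \eqref{S:eq:spec}. For $1\le k,\ell\le N$ we have
\[
(D_0)_{k\ell}=\sum_{r=1}^N (v_r)_k^2(v_r)_\ell^2=\frac{4}{q^2}\sum_{r=1}^{N}\sin^2\frac{rk\pi}{q}\,\sin^2\frac{r\ell\pi}{q}.
\]
We rewrite each squared sine via $2\sin^2\alpha=1-\cos2\alpha$. This turns the summand into
\[
\tfrac14\bigl(1-\cos\tfrac{2rk\pi}{q}\bigr)\bigl(1-\cos\tfrac{2r\ell\pi}{q}\bigr).
\]
Expanding and applying product-to-sum to the cross term leaves a linear combination of $1$, $\cos(2rk\pi/q)$, $\cos(2r\ell\pi/q)$, $\cos(2r(k-\ell)\pi/q)$ and $\cos(2r(k+\ell)\pi/q)$.

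Next we evaluate these sums over $r=1,\dots,N$. The term $r=0$ contributes zero to every sine product, so the range may be extended to $r=0,\dots,q-1$. The needed fact is the geometric-sum identity
\[
\sum_{r=0}^{q-1}\cos\frac{2rm\pi}{q}=q\,[q\mid m].
\]
Since $1\le k,\ell\le N$, neither $k$ nor $\ell$ is divisible by $q$. Among $k-\ell$ and $k+\ell$, which lie in $(-q,2q)$, the difference is divisible by $q$ exactly when $k=\ell$, and the sum exactly when $k+\ell=q$. The full sum over $r$ is therefore
\[
\tfrac14\Bigl(q+\tfrac q2[k=\ell]+\tfrac q2[k+\ell=q]\Bigr).
\]
Multiplying by $4/q^2$ gives
\[
(D_0)_{k\ell}=\frac1q+\frac{1}{2q}\bigl([k=\ell]+[k+\ell=q]\bigr),
\]
which is exactly the $(k,\ell)$ entry of $(2J+I+\Iop)/(2q)$.

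The one delicate point is the coincident case $k=\ell=q/2$ for even $q$. Both indicators fire there, giving $(D_0)_{kk}=2/q$. We must check this against the direct computation. With $k=\ell=q/2$, the cosine $\cos(2r(k+\ell)\pi/q)=\cos(2r\pi)$ is identically $1$, so the additive bookkeeping is consistent and no special case is required. Apart from this, the proof is routine trigonometric bookkeeping, and the main care needed is in tracking the constants $1/4$ and $1/2$ from the double-angle and product-to-sum steps.

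As a sanity check, the row sums should equal $1$. Indeed,
\[
\frac{2N+1+1}{2q}=\frac{2q}{2q}=1,
\]
which is consistent with $D_0$ being a Cesàro limit of doubly stochastic matrices.
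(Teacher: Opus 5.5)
Your proof is correct and follows essentially the same route as the paper's proof. Both rewrite the squared sines via $2\sin^2\alpha=1-\cos2\alpha$, extend the sum to $r=0,\dots,q-1$, evaluate with $\sum_{r=0}^{q-1}\cos(2rm\pi/q)=q[q\mid m]$ and the product-to-sum identity, and then identify the surviving indicators as $[k=\ell]$ and $[k+\ell=q]$; the only small imprecision is lumping both into $(-q,2q)$, whereas the sharper ranges $k-\ell\in(-q,q)$ and $k+\ell\in(0,2q)$ are what actually justify that identification.
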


\begin{proof}
The sine formula for the eigenvectors gives
\[
(D_0)_{k\ell}
=\frac1{q^2}\sum_{r=0}^{q-1}
\left(1-\cos\frac{2rk\pi}{q}\right)
\left(1-\cos\frac{2r\ell\pi}{q}\right).
\]
Use $\sum_{r=0}^{q-1}\cos(2rj\pi/q)=q[q\mid j]$
and the product-to-sum identity.
The two single-cosine sums vanish because $1\le k,\ell<q$.
The remaining terms give
\[
(D_0)_{k\ell}
=\frac1q+\frac1{2q}[q\mid k-\ell]
+\frac1{2q}[q\mid k+\ell].
\]
In this index range, the last two conditions are $k=\ell$
and $k+\ell=q$, respectively.
\end{proof}

\begin{corollary}\label{S:cor:parity}
\[
D_0-\frac2N\sum_{r<s,\ s-r\text{ even}}E_r\circ E_s=\frac1NJ.
\]
\end{corollary}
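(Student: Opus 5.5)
The plan is to combine Lemma~\ref{S:lem:alt} with Proposition~\ref{S:prop:D0}, exactly as in the End Matter computation. Write
\[
\mathcal E=\sum_{r<s,\ s-r\text{ even}}E_r\circ E_s .
\]
The double sum over equal-parity pairs in Lemma~\ref{S:lem:alt} includes both the diagonal terms $r=s$ and both orders of each off-diagonal pair. Because $E_r\circ E_s=E_s\circ E_r$, the off-diagonal part contributes twice. The lemma therefore reads
\[
D_0+2\mathcal E=\tfrac12(I+\Iop),
\]
and hence $2\mathcal E=\tfrac12(I+\Iop)-D_0$.

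Substituting this into the left-hand side gives
\[
D_0-\frac1N\Bigl(\tfrac12(I+\Iop)-D_0\Bigr)=\Bigl(1+\frac1N\Bigr)D_0-\frac1{2N}(I+\Iop)=\frac qN D_0-\frac1{2N}(I+\Iop),
\]
using $q=N+1$. Proposition~\ref{S:prop:D0} gives
\[
\frac qN D_0=\frac{2J+I+\Iop}{2N},
\]
so the terms in $I+\Iop$ cancel and the result is $J/N$.

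The only step needing care is the bookkeeping in the first paragraph. One must check that the equal-parity double sum of Lemma~\ref{S:lem:alt} splits as the diagonal $D_0$ plus twice the strictly ordered sum $\mathcal E$. This uses that $s-r$ even is the same condition as $r\equiv s\pmod 2$, and it produces the factor $2$ that the coefficient $2/N$ in the statement relies on. Everything else is direct substitution.
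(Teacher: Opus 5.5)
Your proposal is correct and follows the paper's proof essentially verbatim: the paper also uses Lemma~\ref{S:lem:alt} to write $\sum_{r<s,\ s-r\text{ even}}E_r\circ E_s=\tfrac12\bigl(\tfrac12(I+\Iop)-D_0\bigr)$, reduces the left-hand side to $\tfrac qN D_0-\tfrac1{2N}(I+\Iop)$, and finishes with Proposition~\ref{S:prop:D0}. Your explicit check that the equal-parity double sum splits as $D_0$ plus twice the ordered sum is exactly the bookkeeping the paper leaves implicit.
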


\begin{proof}
Lemma~\ref{S:lem:alt} gives
\[
\sum_{r<s,\ s-r\text{ even}}E_r\circ E_s
=\frac12\left(\frac12(I+\Iop)-D_0\right).
\]
The asserted left-hand side therefore equals
$\frac qN D_0-\frac1{2N}(I+\Iop)$.
Proposition~\ref{S:prop:D0} reduces it to $J/N$.
\end{proof}

Call a real-valued function $y$ on $\{1,\dots,N\}^2$
\emph{admissible} if
\[
\begin{gathered}
y(a,a)=1,\quad y(a,b)=y(b,a)=y(q-a,q-b),\\
y(q-a,b)=y(a,q-b).
\end{gathered}
\]
For admissible $y$, define the symmetric pair
\begin{equation}\label{S:eq:pairy}
\begin{aligned}
C^y_{rs}&=\frac{y(r,s)+y(r,q-s)}2,\\
S^y_{rs}&=\frac{y(r,s)-y(r,q-s)}2
\end{aligned}
\end{equation}
for $r,s\in\cR$.
The function $y_t(a,b)=\cos((\theta_a-\theta_b)t)$ is admissible.
Since $\theta_{q-s}=-\theta_s$, the product-to-sum identities give
$(C^{y_t},S^{y_t})=F(t)$.

Conversely, an admissible function is determined by its pair.
Indeed,
\[
y(r,s)=C^y_{rs}+S^y_{rs},\quad
y(r,q-s)=C^y_{rs}-S^y_{rs}
\]
for $r,s\in\cR$.
Admissibility recovers the remaining entries.

\begin{lemma}\label{S:lem:expansion}
For admissible $y$,
\[
\bM(C^y,S^y)=\sum_{a,b=1}^Ny(a,b)\,E_a\circ E_b.
\]
\end{lemma}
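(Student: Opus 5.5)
The plan is to verify the identity entrywise by a direct regrouping of the right-hand side, using only the reflection symmetry of the idempotents from Lemma~\ref{S:lem:bip}. I would not try to deduce it from the case $y=y_t$: both sides are linear in $y$, but the functions $y_t$ need not affinely span all admissible functions, so that route could miss part of the admissible space.

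Fix sites $u,v$, put $e_a=(E_a)_{uv}$ and $\epsilon=\sigma_u\sigma_v$. Lemma~\ref{S:lem:bip} gives $e_{q-a}=\epsilon e_a$. The right-hand side at $(u,v)$ is $\sum_{a,b}y(a,b)e_ae_b$. Every index $a\in\{1,\dots,N\}$ is uniquely of the form $r$ or $q-r$ with $r\in\cR$, and the two forms coincide exactly when $r=q/2$. I would partition the double sum according to the representatives $(r,s)\in\cR^2$.

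Take $r,s<q/2$ first. The four terms with $a\in\{r,q-r\}$ and $b\in\{s,q-s\}$ contribute
\[
e_re_s\bigl[y(r,s)+y(q-r,q-s)+\epsilon\,y(r,q-s)+\epsilon\,y(q-r,s)\bigr].
\]
Admissibility gives $y(q-r,q-s)=y(r,s)$ and $y(q-r,s)=y(r,q-s)$, so the bracket equals
\[
2\bigl(y(r,s)+\epsilon\,y(r,q-s)\bigr).
\]
By \eqref{S:eq:pairy}, this is $4C^y_{rs}$ when $\epsilon=1$ and $4S^y_{rs}$ when $\epsilon=-1$. On the other side, \eqref{S:eq:pairing} contributes $w_rw_sC^y_{rs}$ or $w_rw_sS^y_{rs}$. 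Since $w_r=2e_r$ for positive modes, this is $4e_re_s$ times the same matrix entry, so these contributions agree.

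Next, for even $q$, I would treat the zero mode $r=q/2$, where $w_r=e_r$ and the orbit $\{r,q-r\}$ collapses to a single index. With $r=q/2$ and $s<q/2$, only the two terms with $b\in\{s,q-s\}$ occur, each doubled by the symmetry $y(a,b)=y(b,a)$ in the double sum. They contribute $2e_re_s\bigl(y(r,s)+\epsilon y(r,q-s)\bigr)$, which equals $4e_re_sC^y_{rs}$ or $4e_re_sS^y_{rs}$. The quadratic form counts the pairs $(r,s)$ and $(s,r)$, giving $2w_rw_s(\cdot)_{rs}=2\cdot e_r\cdot 2e_s(\cdot)_{rs}$, which matches. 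For $r=s=q/2$, the single term is $y(r,r)e_r^2=e_r^2$. Since $y(r,q-r)=y(r,r)=1$, one has $C^y_{rr}=1$ and $S^y_{rr}=0$, so both sides give $e_r^2$ when $\epsilon=1$. When $\epsilon=-1$, both sides vanish because $e_{q/2}=0$ there, as shown in the proof of Lemma~\ref{S:lem:bip}.

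Summing over all representative pairs gives the claimed equality of matrices. I expect the main obstacle to be bookkeeping rather than any conceptual step: keeping the factors of $2$ straight between the doubled weights $w_r=2e_r$, the half in \eqref{S:eq:pairy}, and the collapsed zero-mode orbit. I would organize these by checking separately the three cases $r,s<q/2$, exactly one index equal to $q/2$, and $r=s=q/2$.
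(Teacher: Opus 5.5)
Your proof is correct and follows the paper's argument essentially step for step. You fix $(u,v)$, group the double sum into the reflection blocks $\{r,q-r\}$ plus the zero-mode singleton, and use admissibility to collapse each block to $2\bigl(y(r,s)+\sigma_u\sigma_v\,y(r,q-s)\bigr)(E_r)_{uv}(E_s)_{uv}$. You then match this against the corresponding summand of the quadratic form via $w_r=2(E_r)_{uv}$ and $w_{q/2}=(E_{q/2})_{uv}$. The only cosmetic difference is that the paper first simplifies the zero-mode entries to $C^y_{zs}=y(z,s)$ and $S^y_{zs}=0$, whereas you match them formally and then use $(E_{q/2})_{uv}=0$ when $\sigma_u\sigma_v=-1$.
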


\begin{proof}
Fix $u,v$ and put $\sigma=\sigma_u\sigma_v$ and $\beta_a=(E_a)_{uv}$.
Then $\beta_{q-a}=\sigma\beta_a$.
Partition the mode indices into the two-element sets $\{r,q-r\}$
with $r<q/2$, together with the zero-mode singleton when $q$ is even.

For $r,s<q/2$, admissibility gives
\[
\begin{aligned}
&\sum_{a\in\{r,q-r\}}\sum_{b\in\{s,q-s\}}y(a,b)\beta_a\beta_b\\
&\quad=2\bigl(y(r,s)+\sigma y(r,q-s)\bigr)\beta_r\beta_s.
\end{aligned}
\]
Since $w(u,v)_r=2\beta_r$, this is the $(r,s)$ summand
of $w^{\mathsf T}C^y w$ for $\sigma=1$ or
$w^{\mathsf T}S^y w$ for $\sigma=-1$.

For even $q$, write $z=q/2$.
A block with $r=z>s$ contributes
$(1+\sigma)y(z,s)\beta_z\beta_s$.
The transposed block has the same formula.
The singleton block contributes $\beta_z^2$.
Here
\[
C^y_{zs}=y(z,s),\quad S^y_{zs}=0,\quad
C^y_{zz}=1,\quad S^y_{zz}=0.
\]
Also, $w(u,v)_z=\beta_z$, which vanishes when $\sigma=-1$.
Thus these blocks agree with the same quadratic-form expression.
Summing all blocks proves the identity.
\end{proof}

\begin{proposition}\label{S:prop:target}
Define
\[
\varkappa(j)=
\begin{cases}
1,&j=0,\\
-1/N,&1\le|j|\le N-1\text{ and }j\text{ even},\\
0,&1\le|j|\le N-1\text{ and }j\text{ odd}.
\end{cases}
\]
Then $y^\star(a,b)=\varkappa(b-a)$ is admissible.
Its pair $(C^\star,S^\star)$ satisfies
$\bM(C^\star,S^\star)=J/N$.
\end{proposition}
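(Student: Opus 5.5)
The argument has two parts: check admissibility directly, then compute $\bM(C^\star,S^\star)$ via Lemma~\ref{S:lem:expansion} and Corollary~\ref{S:cor:parity}.

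\emph{Admissibility.} For $1\le a,b\le N$ we have $|b-a|\le N-1$, so $\varkappa(b-a)$ is defined. Also $\varkappa(0)=1$ and $\varkappa$ is even, which gives $y^\star(a,a)=1$ and $y^\star(a,b)=y^\star(b,a)$. For the reflection conditions:
\begin{itemize}
\item $(q-b)-(q-a)=a-b$, so evenness gives $y^\star(q-a,q-b)=y^\star(a,b)$.
\item $b-(q-a)=a+b-q$ and $(q-b)-a=q-a-b$ are negatives of each other, so $y^\star(q-a,b)=y^\star(a,q-b)$.
\end{itemize}
Both arguments lie in $\{1-N,\dots,N-1\}$ whenever the indices lie in $\{1,\dots,N\}$, so every value used is defined.

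\emph{Evaluation.} Since $y^\star$ is admissible, Lemma~\ref{S:lem:expansion} gives
\[
\bM(C^\star,S^\star)=\sum_{a,b=1}^N\varkappa(b-a)\,E_a\circ E_b .
\]
We split this sum by the value of $\varkappa$:
\begin{itemize}
\item the diagonal terms $a=b$ contribute $D_0$;
\item the off-diagonal terms with $b-a$ odd vanish;
\item the off-diagonal terms with $b-a$ even and nonzero contribute $-\frac1N\sum_{a\ne b,\ b-a\text{ even}}E_a\circ E_b$.
\end{itemize}
The entrywise product is commutative, so $E_a\circ E_b=E_b\circ E_a$. The last sum therefore equals $-\frac2N\sum_{a<b,\ b-a\text{ even}}E_a\circ E_b$. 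Corollary~\ref{S:cor:parity} identifies $D_0$ minus this quantity as $J/N$.

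The only point needing care is the index bookkeeping. We must check that every argument of $\varkappa$ stays within $|j|\le N-1$, and that Lemma~\ref{S:lem:expansion} is applied to the full double sum over all modes, including the zero mode when $q$ is even. Both are already covered by the lemma's statement, so the proof is essentially a direct substitution.
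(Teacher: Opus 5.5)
Your proof is correct and follows the paper's argument essentially step for step. You get admissibility from the evenness of $\varkappa$ and $\varkappa(0)=1$, using that the cross-identity arguments $a+b-q$ and $q-a-b$ are negatives within $[-(N-1),N-1]$; then Lemma~\ref{S:lem:expansion} reduces $\bM(C^\star,S^\star)$ to $D_0+2\sum_{a<b}\varkappa(b-a)E_a\circ E_b$, which Corollary~\ref{S:cor:parity} identifies as $J/N$.
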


\begin{proof}
The function $\varkappa$ is even and $\varkappa(0)=1$.
Reversing both indices changes $b-a$ to $a-b$.
For the cross identity, the arguments are $a+b-q$ and $q-a-b$.
They are negatives of one another.
All arguments lie in $[-(N-1),N-1]$, so $y^\star$ is admissible.
Lemma~\ref{S:lem:expansion} gives
\[
\bM(C^\star,S^\star)
=D_0+2\sum_{a<b}\varkappa(b-a)E_a\circ E_b.
\]
Corollary~\ref{S:cor:parity} identifies this matrix as $J/N$.
\end{proof}

\subsection{Uniqueness of the coherences}

For $0\le j\le2q$, define $c_j\in\R^N$ by
$(c_j)_u=\cos(ju\pi/q)$.
Then $c_{2q-j}=c_j$.
For $1\le k<l\le N$, put $\psi(k,l)=(q-l,q-k)$.

\begin{lemma}\label{S:lem:rankone}
For $1\le k<l\le N$,
\[
E_k\circ E_l=\frac1{q^2}\,(c_{l-k}-c_{k+l})(c_{l-k}-c_{k+l})^{\mathsf T}.
\]
Consequently $E_{q-l}\circ E_{q-k}=E_k\circ E_l$, and $\psi$ preserves both $l-k$ and $\theta_k-\theta_l$.
\end{lemma}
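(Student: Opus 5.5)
The identity follows from the product-to-sum formula applied entrywise, after which the two consequences are checked directly.

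\emph{Rank-one formula.} Since $E_k=v_kv_k^{\mathsf T}$ and $E_l=v_lv_l^{\mathsf T}$, we have $E_k\circ E_l=(v_k\circ v_l)(v_k\circ v_l)^{\mathsf T}$. It therefore suffices to compute $v_k\circ v_l$. From \eqref{S:eq:spec} and $2\sin\alpha\sin\beta=\cos(\alpha-\beta)-\cos(\alpha+\beta)$,
\[
(v_k\circ v_l)_u=\frac2q\sin\frac{ku\pi}{q}\sin\frac{lu\pi}{q}
=\frac1q\Bigl(\cos\frac{(l-k)u\pi}{q}-\cos\frac{(k+l)u\pi}{q}\Bigr),
\]
where we used that cosine is even to write $l-k$ in place of $k-l$. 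This says $v_k\circ v_l=q^{-1}(c_{l-k}-c_{k+l})$. Substituting gives the displayed identity. Note that $0<l-k$ and $k+l\le 2N<2q$, so the vectors $c_j$ used lie in the defined range $0\le j\le 2q$.

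\emph{Reflection symmetry.} Set $(k',l')=\psi(k,l)=(q-l,q-k)$. Since $k<l$ we get $q-l<q-k$, and both indices lie in $\{1,\dots,N\}$, so the lemma applies to this pair. Its index difference is $l'-k'=(q-k)-(q-l)=l-k$, and its index sum is $k'+l'=2q-(k+l)$. Because $c_{2q-j}=c_j$, the vector $c_{l'-k'}-c_{k'+l'}$ equals $c_{l-k}-c_{k+l}$. The rank-one formula then gives $E_{q-l}\circ E_{q-k}=E_k\circ E_l$.

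For the Bohr frequencies, $\theta_{q-r}=-\theta_r$ gives $\theta_{q-l}-\theta_{q-k}=-\theta_l+\theta_k=\theta_k-\theta_l$. So $\psi$ preserves both $l-k$ and $\theta_k-\theta_l$.

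All steps are routine. The only point needing care is checking that $\psi$ maps pairs $k<l$ in $\{1,\dots,N\}$ to pairs of the same form, which the preceding paragraph verifies.
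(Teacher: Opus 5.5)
Your proposal is correct and follows the paper's proof essentially step for step. You write $E_k\circ E_l$ as the rank-one matrix of $v_k\circ v_l$, apply the product-to-sum identity, and use $c_{2q-j}=c_j$ together with $\theta_{q-r}=-\theta_r$ for the reflected pair. You also check that $\psi$ maps index pairs into the admissible range and that the indices $j$ of the $c_j$ stay in $[0,2q]$, details the paper leaves implicit.
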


\begin{proof}
Since $E_r=v_rv_r^{\mathsf T}$,
\[
E_k\circ E_l=(v_k\circ v_l)(v_k\circ v_l)^{\mathsf T}.
\]
The product-to-sum identity gives
\[
(v_k)_u(v_l)_u
=\frac1q\left(\cos\frac{(l-k)u\pi}{q}
-\cos\frac{(k+l)u\pi}{q}\right).
\]
This proves the formula.
The reflected pair has the same difference and sum $2q-(k+l)$.
Since $c_{2q-j}=c_j$, its rank-one matrix agrees.
Finally, $\theta_{q-l}-\theta_{q-k}=-\theta_l+\theta_k$,
so $\psi$ preserves the Bohr frequency.
\end{proof}

\begin{lemma}\label{S:lem:rank}
Choose one pair from each $\psi$-orbit in $\{(k,l):1\le k<l\le N\}$. The corresponding matrices
$E_k\circ E_l$ are linearly independent.
\end{lemma}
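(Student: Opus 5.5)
Each orbit representative is recoded by the pair $(b,c)$ with $b=l-k$ and $c=k+l$ taken in $\{k+l,\,2q-k-l\}\cap[0,q]$. This amounts to replacing $(k,l)$ by $\psi(k,l)$ when $k+l>q$. It gives $1\le b<c\le q$ with $b\equiv c\pmod 2$. By Lemma~\ref{S:lem:rankone} and $c_{2q-j}=c_j$, the matrix is $q^{-2}w_{bc}w_{bc}^{\mathsf T}$ with $w_{bc}=c_b-c_c$. Distinct orbits give distinct $(b,c)$: the pair $(b,c)$ determines $\{k,l\}$ up to applying $\psi$, and a fixed point of $\psi$ (when $k+l=q$) is its own orbit. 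So I must show that a relation $\sum z_{bc}w_{bc}w_{bc}^{\mathsf T}=0$ forces every $z_{bc}=0$.

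The key computation is the Gram identity. For $0\le a\le q$ and $1\le j\le N$,
\[
\langle c_a,c_j\rangle=\sum_{u=1}^{N}\cos\frac{au\pi}{q}\cos\frac{ju\pi}{q}=\frac q2[a=j]-[a\equiv j\pmod 2].
\]
I would obtain it by writing the product as half the sum of $\cos((a\pm j)u\pi/q)$ and using $\sum_{u=0}^{q-1}\cos(mu\pi/q)=q[m\equiv0\ (2q)]+[m\text{ odd}]$. This finite geometric sum is the one routine step that needs care. The summation runs over $u=1,\dots,q-1$, so the term $u=0$ is removed, and that removal produces the $-1$ when $a\equiv j$. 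Because $b\equiv c$, the parity terms cancel in $w_{bc}$, leaving
\[
\langle w_{bc},c_j\rangle=\frac q2\bigl([b=j]-[c=j]\bigr).
\]

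Next, sandwich the relation between $c_j^{\mathsf T}$ and $c_{j'}$ for $1\le j,j'\le N$. This gives
\[
\sum z_{bc}\bigl([b=j]-[c=j]\bigr)\bigl([b=j']-[c=j']\bigr)=0 .
\]
This is exactly the $(j,j')$ entry of the Laplacian $L$ of the weighted graph on vertices $\{1,\dots,q\}$ with edge $\{b,c\}$ of weight $z_{bc}$. The vertex $0$ never occurs, since $b\ge1$. So the principal submatrix of $L$ on $\{1,\dots,N\}$ vanishes. Its off-diagonal entries are $-z_{bc}$ for each edge with both endpoints at most $N$. Since $b<c$, these are all edges with $c\le N$, and so those weights are zero. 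The remaining edges are $\{b,q\}$ with $b<q$. Each vertex $b\le N$ meets exactly one of them, so the diagonal entry at $b$ reduces to $z_{bq}$, which therefore vanishes. Hence all coefficients vanish, and the matrices are independent.

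The main obstacle I expect is bookkeeping rather than ideas. First, the orbit-to-$(b,c)$ encoding must be injective. Second, the Gram identity must be correct at the boundary values $a=0$ and $a=q$, where the frequency is $0$ or $\pi$. That identity needs to be checked carefully, because the parity cancellation depends on it.
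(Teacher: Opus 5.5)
Your proposal is correct and follows the paper's proof essentially step for step: it uses the same $(b,c)$ encoding of $\psi$-orbits with $c\le q$, the same Gram identity $\langle c_a,c_j\rangle=\frac q2[a=j]-[a\equiv j\pmod 2]$ with the parity terms cancelling in $w_{bc}$, and the same reduction to a vanishing principal submatrix of a weighted Laplacian on $\{1,\dots,q\}$, whose off-diagonal entries and then diagonal entries force every weight to vanish. One wording fix: a vertex $b\le N$ meets at most one remaining edge $\{b,q\}$, and only when $b\equiv q\pmod 2$; this does not affect the argument.
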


\begin{proof}
For an integer $j$, geometric sums give
\[
\sum_{u=1}^{q-1}\cos\frac{ju\pi}{q}
=\begin{cases}
q-1,&2q\mid j,\\
-1,&j\text{ even and }2q\nmid j,\\
0,&j\text{ odd}.
\end{cases}
\]
For odd $j$, the zero sum also follows by pairing $u$ with $q-u$.
The product-to-sum identity gives
\begin{equation}\label{S:eq:gram}
\langle c_a,c_j\rangle
=\frac q2[a=j]-[a\equiv j\pmod2]
\end{equation}
for $0\le a\le q$ and $1\le j\le N$.
In this range, $a+j$ is never divisible by $2q$.
The difference $a-j$ is divisible by $2q$ only when $a=j$.

Choose each $\psi$-orbit representative with $c=k+l\le q$,
and put $b=l-k$.
Then $1\le b<c\le q$ and $b\equiv c\pmod2$.
The formulas $k=(c-b)/2$ and $l=(c+b)/2$ show that distinct orbits
give distinct pairs $(b,c)$.
Put $w_{bc}=c_b-c_c$.
The parity terms in \eqref{S:eq:gram} cancel, giving
\[
\langle w_{bc},c_j\rangle=\frac q2([b=j]-[c=j]).
\]

Suppose that $\sum_{(b,c)}z_{bc}w_{bc}w_{bc}^{\mathsf T}=0$.
Testing on $c_j,c_{j'}$ and cancelling $q^2/4$ gives
\[
\sum_{(b,c)}z_{bc}([b=j]-[c=j])([b=j']-[c=j'])=0
\]
for $1\le j,j'\le N$.
Let $L=\sum_{(b,c)}z_{bc}(e_b-e_c)(e_b-e_c)^{\mathsf T}$ on $\R^q$.
This is a graph Laplacian with possibly signed real edge weights.
Its principal submatrix on $\{1,\dots,N\}$ vanishes.
For $c\le N$, the $(b,c)$ entry is $-z_{bc}$, so these weights vanish.
Only edges $\{b,q\}$ can remain.
Their weights are the remaining diagonal entries $L_{bb}=z_{bq}$,
so they also vanish.
Lemma~\ref{S:lem:rankone} proves the claimed independence.
\end{proof}

\begin{proposition}\label{S:prop:injective}
Let $\mathcal A$ be the affine span of $F(\R)\cup\{(C^\star,S^\star)\}$. Then $\bM$ is injective on $\mathcal A$.
\end{proposition}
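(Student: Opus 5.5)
Every element of $\mathcal A$ will be realized as the pair $(C^y,S^y)$ of an admissible-type function $y$ that is constant on $\psi$-orbits. Lemmas~\ref{S:lem:expansion} and~\ref{S:lem:rank} then convert injectivity of $\bM$ on $\mathcal A$ into linear independence of the orbit matrices $E_k\circ E_l$.

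\emph{Step 1: parametrize $\mathcal A$.} Consider the vector space $\mathcal Y$ of real functions $y$ on $\{1,\dots,N\}^2$ satisfying the linear identities
\[
y(a,b)=y(b,a)=y(q-a,q-b),\qquad y(q-a,b)=y(a,q-b).
\]
Let $\mathcal Y_1$ be the affine subspace of $\mathcal Y$ cut out by $y(a,a)=1$; its elements are exactly the admissible functions. The map $y\mapsto(C^y,S^y)$ of \eqref{S:eq:pairy} is linear, and it is injective on $\mathcal Y$ because the inversion formulas after \eqref{S:eq:pairy} only use the symmetries, not the diagonal normalization. Both $y_t$ and $y^\star$ lie in $\mathcal Y_1$. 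Hence every point of $\mathcal A$ is an affine combination of such pairs, so it equals $(C^y,S^y)$ for the same affine combination $y\in\mathcal Y_1$.

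\emph{Step 2: reduce to a kernel statement.} Take two points $(C^{y},S^{y})$ and $(C^{y'},S^{y'})$ of $\mathcal A$ with the same image under $\bM$. Lemma~\ref{S:lem:expansion} is linear in $y$, so it extends to affine combinations of admissible functions. Applying it and subtracting, with $z=y-y'$, gives
\[
\sum_{a,b}z(a,b)\,E_a\circ E_b=0 .
\]
Here $z$ vanishes on the diagonal and is symmetric, so the sum equals $2\sum_{a<b}z(a,b)E_a\circ E_b$. The relation $z(a,b)=z(q-a,q-b)$ says that $z$ is constant on $\psi$-orbits. Lemma~\ref{S:lem:rankone} says that $E_k\circ E_l$ is also constant on $\psi$-orbits. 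Grouping terms by orbit therefore yields
\[
\sum_{\text{orbits}} m_O\,z_O\,E_{k_O}\circ E_{l_O}=0,
\]
where $m_O\in\{1,2\}$ is the orbit size.

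\emph{Step 3: conclude.} Lemma~\ref{S:lem:rank} now gives $z_O=0$ for every orbit. Hence $z(a,b)=0$ for all $a<b$, and by symmetry for all $a\neq b$; it is $0$ on the diagonal by construction. So $y=y'$, and the pairs coincide.

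The one point that needs care is Step 1. Elements of $\mathcal A$ are a priori arbitrary affine combinations of the pairs $F(t)$ and $(C^\star,S^\star)$. The argument works because admissibility consists of linear identities plus the diagonal normalization, and the normalization is preserved by affine combinations (coefficients summing to one). So every element of $\mathcal A$ really is an admissible pair, and Lemma~\ref{S:lem:expansion} applies to it.
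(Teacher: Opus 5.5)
Your proof is correct and follows the paper's argument essentially step for step. You realize every point of $\mathcal A$ as $(C^y,S^y)$ for an admissible $y$, use Lemma~\ref{S:lem:expansion} to turn equal images into $\sum_{k<l}(y-y')(k,l)\,E_k\circ E_l=0$, group by $\psi$-orbits of size $1$ or $2$, and conclude with Lemmas~\ref{S:lem:rankone} and~\ref{S:lem:rank}. The injectivity of $y\mapsto(C^y,S^y)$ that you note in Step~1 is harmless but unnecessary, since only the fact that $\mathcal A$ lies in the image of the admissible functions is used.
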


\begin{proof}
Admissible functions form an affine space, and
$y\mapsto(C^y,S^y)$ is affine.
Its image contains $F(\R)$ and the target.
Thus every point of $\mathcal A$ comes from an admissible function.

Suppose that two such pairs have the same image under $\bM$.
Their diagonal function values both equal $1$.
Lemma~\ref{S:lem:expansion} gives
\[
\sum_{k<l}(y-y')(k,l)E_k\circ E_l=0.
\]
Admissibility makes the coefficients constant on $\psi$-orbits.
Grouping by orbit multiplies each coefficient by the orbit size,
which is $1$ or $2$.
Lemmas~\ref{S:lem:rankone} and~\ref{S:lem:rank} force all coefficients
to vanish.
Hence $y=y'$, and the moment pairs agree.
\end{proof}

\begin{theorem}[Forced coherences]\label{S:thm:forced}
If $\Mh[\mu]=\frac1NJ$ for $P_N$, then $(C[\mu],S[\mu])=(C^\star,S^\star)$. Equivalently, for $1\le k<l\le N$,
\begin{equation}\label{S:eq:forced}
\int_\R\cos\bigl((\theta_k-\theta_l)t\bigr)\,d\mu(t)=
\begin{cases}\displaystyle-\frac1N, & l-k\text{ even},\\[6pt] 0, & l-k\text{ odd}.\end{cases}
\end{equation}
\end{theorem}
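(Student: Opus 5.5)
The plan is to apply Theorem~\ref{S:thm:reduction}(2) with the affine space $\mathcal A$ of Proposition~\ref{S:prop:injective}, that is, the affine span of $F(\R)\cup\{(C^\star,S^\star)\}$. That result requires three inputs. First, $\mathcal A$ is an affine space containing $F(\R)$, which holds by construction. Second, $\bM$ is injective on $\mathcal A$, which is Proposition~\ref{S:prop:injective}. Third, the target pair lies in $\mathcal A$ and satisfies $\bM(C^\star,S^\star)=J/N$, which is Proposition~\ref{S:prop:target}. Theorem~\ref{S:thm:reduction}(2) then gives $(C[\mu],S[\mu])=(C^\star,S^\star)$ for every uniformly mixing $\mu$. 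Concretely, the moment pair $\int F\,d\mu$ lies in $\conv F(\R)\subseteq\mathcal A$ by Lemma~\ref{S:lem:bary}, it has the same image $J/N$ as the target, and injectivity forces equality.

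It remains to translate equality of moment pairs into the coherence formula \eqref{S:eq:forced}. Put $y_\mu(a,b)=\int\cos((\theta_a-\theta_b)t)\,d\mu(t)$. Each $y_t$ is admissible, and admissibility is a set of linear conditions, so $y_\mu$ is admissible as well. Since $y\mapsto(C^y,S^y)$ is affine and $(C^{y_t},S^{y_t})=F(t)$, integrating in $t$ gives $(C^{y_\mu},S^{y_\mu})=(C[\mu],S[\mu])$. The integral commutes with the map because everything is bounded and continuous.

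An admissible function is determined by its pair, via the recovery formulas $y(r,s)=C_{rs}+S_{rs}$ and $y(r,q-s)=C_{rs}-S_{rs}$ for $r,s\in\cR$, together with the symmetries in the definition of admissibility. Therefore $y_\mu=y^\star$. Reading off $y^\star(k,l)=\varkappa(l-k)$ for $1\le k<l\le N$ gives $-1/N$ when $l-k$ is even and $0$ when $l-k$ is odd, because $1\le l-k\le N-1$. This also proves the stated equivalence, since the map between pairs and admissible functions is a bijection in both directions.

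The substantive work lies in the injectivity proposition, which is already established, so the only step needing care is the bookkeeping that integrated moments correspond to the integrated admissible function. In particular, on the zero-mode row for even $q$, the formulas $C^y_{zs}=y(z,s)$ and $S^y_{zs}=0$ must be consistent with $\sin(t\theta_z)=0$. They are, because $\theta_{q/2}=0$.
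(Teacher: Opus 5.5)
Your proposal is correct and follows the paper's proof essentially verbatim. You apply Theorem~\ref{S:thm:reduction}(2) with the affine span of Proposition~\ref{S:prop:injective} and the target of Proposition~\ref{S:prop:target}, then pass from moment pairs to coherences because $y_\mu$ is admissible with pair $(C[\mu],S[\mu])$ and an admissible function is determined by its pair. Your extra remarks on commuting the integral with the linear map $y\mapsto(C^y,S^y)$ and on the zero-mode row are correct bookkeeping that the paper leaves implicit.
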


\begin{proof}
Apply Theorem~\ref{S:thm:reduction}(2) to the affine space in
Proposition~\ref{S:prop:injective} and the target in
Proposition~\ref{S:prop:target}.
This gives equality of the moment pairs.
For $y(a,b)=\int_\R\cos((\theta_a-\theta_b)t)\,d\mu(t)$,
the pair is $(C[\mu],S[\mu])$.
An admissible function is determined by its pair.
Thus the moment equality is equivalent to $y=y^\star$.
\end{proof}

\begin{corollary}\label{S:cor:reduction}
$P_{q-1}$ admits uniform average mixing if and only if $(C^\star,S^\star)\in\conv F(\R)$.
\end{corollary}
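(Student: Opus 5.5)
This follows by combining Theorem~\ref{S:thm:reduction}(2) with Propositions~\ref{S:prop:target} and~\ref{S:prop:injective}. Take $\mathcal A$ to be the affine span of $F(\R)\cup\{(C^\star,S^\star)\}$. This is an affine space containing $F(\R)$. By Proposition~\ref{S:prop:injective}, $\bM$ is injective on $\mathcal A$. By Proposition~\ref{S:prop:target}, the target pair $(C^\star,S^\star)$ lies in $\mathcal A$ and satisfies $\bM(C^\star,S^\star)=J/N$. All hypotheses of Theorem~\ref{S:thm:reduction}(2) are therefore met. Its final equivalence says that uniform average mixing occurs if and only if $(C^\star,S^\star)\in\conv F(\R)$, which is the claim.

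For completeness, I will also spell out both directions directly, since this costs little.

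\emph{Necessity.} Suppose $\mu$ is uniformly mixing. Lemma~\ref{S:lem:bary}, applied with $\Psi=F$ on $\R$, places $\int F\,d\mu$ in $\conv F(\R)$. This uses that $F$ is bounded and continuous, which holds because its entries are products of sines and cosines. Theorem~\ref{S:thm:forced} identifies this moment pair with $(C^\star,S^\star)$.

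\emph{Sufficiency.} Suppose $(C^\star,S^\star)=\sum_j\mu_jF(t_j)$ is a finite convex combination. Then the measure $\sum_j\mu_j\delta_{t_j}$ has moment pair $(C^\star,S^\star)$. Linearity of $\bM$ together with $\bM(F(t))=M(t)$ gives $\Mh=\bM(C^\star,S^\star)=J/N$.

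There is no real obstacle here. The only points to check are that $\mathcal A$ contains $F(\R)$, which holds by construction, and that membership in $\conv F(\R)$ means finite convex combinations. The second point is exactly what the sufficiency direction uses.
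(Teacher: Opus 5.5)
Your proof is correct and is exactly the paper's argument: apply Theorem~\ref{S:thm:reduction}(2) with the affine space $\mathcal A$ of Proposition~\ref{S:prop:injective} and the target of Proposition~\ref{S:prop:target}. Your direct unpacking of both directions is also sound; the only quibble is that $(C^\star,S^\star)\in\mathcal A$ holds by the definition of $\mathcal A$ rather than by Proposition~\ref{S:prop:target}, which is immaterial.
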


\begin{proof}
Apply Theorem~\ref{S:thm:reduction}(2) and
Propositions~\ref{S:prop:target} and~\ref{S:prop:injective}.
\end{proof}

The reflection parity of $v_r$ is $(-1)^{r-1}$.
Thus \eqref{S:eq:forced} requires zero cosine coherence between modes
of opposite mirror parity.
Between distinct modes of equal parity, it requires $-1/N$
independently of the Bohr frequency.

For $q=4$, the gaps $\sqrt2$ and $2\sqrt2$ have odd and even
index differences, respectively.
Hence
\[
\int_\R\cos(\sqrt2t)\,d\mu(t)=0,\quad
\int_\R\cos(2\sqrt2t)\,d\mu(t)=-\frac13.
\]
These are the values in \cite[Corollary~9]{BCM24}.
The case $q=5$ agrees with \cite[\S4.3]{BCM24}.
Theorem~\ref{S:thm:forced} establishes uniqueness for every path.

\section{Frequency collisions}\label{S:sec:collisions}

\begin{definition}\label{S:def:collision}
A \emph{parity collision} at the modulus $q$ consists of pairs $1\le k<l\le N$ and $1\le k'<l'\le N$ with
$\theta_k-\theta_l=\theta_{k'}-\theta_{l'}$ and $l-k\not\equiv l'-k'\pmod2$.
\end{definition}

\begin{corollary}\label{S:cor:collision}
If there is a parity collision at $q$, then $P_{q-1}$ admits no uniform average mixing.
\end{corollary}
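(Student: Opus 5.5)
The plan is to argue by contradiction, using Theorem~\ref{S:thm:forced}. Suppose $\mu$ is a Borel probability measure on $\R$ with $\Mh[\mu]=\frac1NJ$ for $P_{q-1}$. Let $(k,l)$ and $(k',l')$ be the pairs of the parity collision, and call the common Bohr frequency
\[
\omega=\theta_k-\theta_l=\theta_{k'}-\theta_{l'}.
\]
By Definition~\ref{S:def:collision}, the index differences $l-k$ and $l'-k'$ have opposite parity. Without loss of generality, $l-k$ is even and $l'-k'$ is odd.

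Now apply \eqref{S:eq:forced} to each pair. For the pair $(k,l)$ it gives
\[
\int_\R\cos(\omega t)\,d\mu(t)=-\frac1N,
\]
and for the pair $(k',l')$ it gives
\[
\int_\R\cos(\omega t)\,d\mu(t)=0.
\]
The left-hand sides are literally the same integral, because the integrand depends on the pair only through $\theta_k-\theta_l$. Hence $-1/N=0$, which is impossible since $N=q-1\ge2$ is finite. So no uniformly mixing $\mu$ exists.

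No step here is a real obstacle: all the work is contained in Theorem~\ref{S:thm:forced}. The only point to check is that both pairs satisfy $1\le k<l\le N$, as that theorem requires. Definition~\ref{S:def:collision} builds this in directly.
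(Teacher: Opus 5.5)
Your proposal is correct and is the paper's argument: Theorem~\ref{S:thm:forced} forces the common moment $\int_\R\cos(\omega t)\,d\mu(t)$ to equal both $-1/N$ and $0$, which is a contradiction. Your check that both pairs satisfy $1\le k<l\le N$ is the only side condition, and Definition~\ref{S:def:collision} supplies it.
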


\begin{proof}
By Theorem~\ref{S:thm:forced} the moment at the common Bohr frequency would be both $-\frac1N$ and $0$.
\end{proof}

A \emph{relation} is an identity $\sum_i a_i\eta_i=0$
with positive integers $a_i$ and distinct roots of unity $\eta_i$.
Its \emph{weight} is $\sum_i a_i$.
It is \emph{minimal} if no nonempty proper submultiset of its terms
has sum zero.
Every relation decomposes into minimal relations.
A \emph{rotation} multiplies all terms by the same root of unity.

For a prime $p$, let $R_p$ denote the relation
\[
\sum_{i=0}^{p-1}\zeta_p^{\,i}=0,
\quad \zeta_m=e^{2\pi i/m}.
\]
For the mixed types below, take $p\in\{5,7\}$.
Choose $j$ rotated copies of $R_3$, each sharing one root with $R_p$.
Require different copies to share different roots of $R_p$.
Subtract these relations from $R_p$ and absorb the minus signs
into the remaining roots.
Denote any resulting relation by $(R_p{:}jR_3)$.
Equivalently, each selected term $\eta$ is replaced by
$-\eta\zeta_3$ and $-\eta\zeta_3^2$.

\begin{theorem}[{\cite[Theorem~3 and Table~1]{PR98}}]
\label{S:thm:PR}
Up to rotation, every minimal relation of weight at most $8$
has one of the following types:
\begin{align*}
& R_2,\quad R_3,\quad R_5,\quad (R_5{:}R_3),\quad \\
& R_7,\quad (R_5{:}2R_3),\quad (R_5{:}3R_3),\quad (R_7{:}R_3).
\end{align*}
Their respective weights are $2,3,5,6,7,7,8,8$.
\end{theorem}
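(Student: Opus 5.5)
This is the Conway--Jones--Poonen--Rubinstein classification, and in the paper we would simply cite \cite[Theorem~3 and Table~1]{PR98}. A self-contained proof would follow the Conway--Jones strategy \cite{CJ76}, in three steps: normalize, bound the primes involved, and enumerate.

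\emph{Normalization.} Rotate a minimal relation $\sum_i a_i\eta_i=0$ so that one term equals $1$. Let $m$ be the least common order of the $\eta_i$. The first step is to show that $m$ is squarefree. Suppose $p^2\mid m$. Write $m=pm'$, so that $p\mid m'$. Then $[\Q(\zeta_m):\Q(\zeta_{m'})]=p$ with basis $1,\zeta_m,\dots,\zeta_m^{p-1}$. Grouping the terms by their coset of $\mu_{m'}$ splits the relation into $p$ separately vanishing subsums. Minimality then leaves only the coset of $1$, so all terms lie in $\mu_{m'}$, contradicting the choice of $m$.

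\emph{Prime bound.} Let $p$ be a prime dividing the squarefree $m$, and write $m=pm'$. Since $\Q(\zeta_p)$ and $\Q(\zeta_{m'})$ are linearly disjoint, every term can be written $\zeta_p^{\,i}\alpha$ with $\alpha\in\mu_{m'}$. The relation becomes $\sum_{i=0}^{p-1}\zeta_p^{\,i}A_i=0$, where each $A_i$ is a sum of $m'$-th roots of unity. The only $\Q(\zeta_{m'})$-linear relation among $1,\zeta_p,\dots,\zeta_p^{p-1}$ is the all-equal one, so $A_0=A_1=\dots=A_{p-1}$. Two cases follow.
\begin{itemize}
\item If some $A_i$ is an empty sum, then every $A_i$ vanishes. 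Minimality then forces only one class $i$ to occur, so $p\nmid m$ after rotation, a contradiction.
\item Hence all $p$ classes are occupied. Each $A_i$ is represented by a few $m'$-th roots, and these representations differ from one another by relations in $\mu_{m'}$.
\end{itemize}
Inducting on the number of prime factors, this yields the Conway--Jones bound
\[
\text{weight}\ \ge\ 2+\sum_{p\mid m}(p-2),
\]
with the refinement that the weight equals $p$ exactly when the relation is a rotated $R_p$. For weight at most $8$ the bound gives $\sum_{p\mid m}(p-2)\le 6$. So $m$ divides one of $2\cdot3\cdot5$, $2\cdot3\cdot7$, $2\cdot 5$, or $2\cdot7$, and the prime $2$ contributes nothing. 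One also checks that $2\mid m$ can be absorbed: replacing $\eta$ by $-(-\eta)$ converts sign flips into roots of odd order. The only genuinely new minimal relation containing $2$ is then $R_2$.

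\emph{Enumeration, the main obstacle.} For each admissible prime set, I would run the decomposition $A_0=\dots=A_{p-1}$ with $p$ the largest prime. For $p=5$ with $m'=3$, each $A_i$ is either a single root or $-(\text{two roots})$, using $1=-\zeta_3-\zeta_3^2$. Replacing $j$ of the five terms of $R_5$ in this way gives weight $5+j$. This yields $(R_5{:}jR_3)$ for $j\le3$, and $j=4,5$ exceed weight $8$. For $p=7$, only $j\le1$ fits, giving $(R_7{:}R_3)$.

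The delicate point is to exclude other shapes. These include mixing several representations inside a single $A_i$, or configurations in which the $A_i$ are nonminimal combinations. Both must be shown to produce either weight above $8$ or a proper vanishing subsum. I would handle this by a direct case analysis on the multiset of weights $|A_i|\in\{1,2,\dots\}$ summing to at most $8$, using minimality at each step to discard decomposable configurations. The weights $2,3,5,6,7,7,8,8$ are then read off directly from the surviving types.
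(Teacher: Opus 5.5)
Your normalization, the squarefreeness argument, and the step showing that the $p$ group sums $A_0=\dots=A_{p-1}$ are equal and all nonempty match the paper's proof. The gap is the enumeration, which is where the classification actually happens. You assert that for $p=5$ each $A_i$ is ``a single root or $-(\text{two roots})$'', and you defer the exclusion of every other shape to a case analysis you do not carry out. Nothing in the proposal shows that the common value of the $A_i$ is a root of unity, or rules out a group of three or four roots summing to that value.

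Your prime bound is also only asserted. The Conway--Jones inequality $w\ge2+\sum_{p\mid m}(p-2)$ is a genuine theorem whose proof you only gesture at with ``inducting on the number of prime factors''. The remark about absorbing the prime $2$ is too vague to use, since the mixed types $(R_p{:}jR_3)$ themselves have $2\mid m$. Neither is needed: nonemptiness of the groups already gives $p\le w$ for every prime $p\mid m$, hence $m\mid210$.

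The missing idea is short. With $p\ge5$ nonempty groups and $w\le8$, some group is a singleton, so the common group sum is a root of unity; rotate it to $1$. A group has no vanishing sub-multiset, by minimality of the original relation, so appending $-1$ to it yields a minimal relation.
\begin{itemize}
\item[(i)] When $5$ is the largest prime, $m/5\mid6$. The only minimal relations among sixth roots are $R_2$ and $R_3$: with no antipodal pair, write the sum as $c_0+c_1\zeta_3+c_2\zeta_3^2$ and get $c_0=c_1=c_2=\pm1$. Hence every group is $\{1\}$ or $\{-\zeta_3,-\zeta_3^2\}$, and $5+j\le8$ gives $R_5$ and $(R_5{:}jR_3)$ with $j\le3$.
\item[(ii)] When $7$ is the largest prime, weight $7$ or $8$ leaves at most one non-singleton group. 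That group is a pair of unit vectors summing to $1$, i.e.\ $\{e^{i\pi/3},e^{-i\pi/3}\}=\{-\zeta_3^2,-\zeta_3\}$. This gives $R_7$ or $(R_7{:}R_3)$, with no prior restriction $m\mid42$.
\item[(iii)] The case $m\mid6$ follows from the same sixth-root lemma, giving $R_2$ and $R_3$. This replaces your remark about the prime $2$.
\end{itemize}
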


\begin{proof}
We give the weight-at-most-eight argument to make the classification
used below explicit.
Rotate the relation so that one term is $1$.
Let $m$ be the least common multiple of the orders of its terms.
The positivity of the coefficients implies $m>1$.

First, $m$ is squarefree.
Suppose that $p^2\mid m$ for a prime $p$.
The extension
$\Q(\zeta_m)/\Q(\zeta_{m/p})$ has degree $p$.
Grouping powers of $\zeta_m$ by their exponents modulo $p$ writes
the relation as
\[
\sum_{j=0}^{p-1}\zeta_m^j A_j=0,
\quad A_j\in\Q(\zeta_{m/p}).
\]
Linear independence forces every $A_j$ to vanish.
Each nonempty group is therefore a subrelation.
Minimality permits only one nonempty group.
The term $1$ puts this group at $j=0$, so all terms have orders
dividing $m/p$.
This contradicts the definition of $m$.

Now let $p\mid m$ and put $n=m/p$.
Since $m$ is squarefree, every term has a unique form
$\zeta_p^j\rho$ with $\rho^n=1$.
The extension $\Q(\zeta_m)/\Q(\zeta_n)$ has degree $p-1$.
Thus the minimal polynomial of $\zeta_p$ over $\Q(\zeta_n)$ is
$1+X+\dots+X^{p-1}$.
Writing the relation as $\sum_{j=0}^{p-1}\zeta_p^j A_j=0$
shows that all group sums $A_j$ are equal.
No group can be empty.
Otherwise all group sums would be zero, and minimality would again
force all terms into the group containing $1$, contradicting $p\mid m$.
Consequently every prime dividing $m$ is at most the weight.
For weight at most $8$, it follows that $m\mid210$.

We also need the minimal relations among sixth roots of unity.
An antipodal pair gives a relation of type $R_2$.
If no antipodal pair occurs, write the sum as
$c_0+c_1\zeta_3+c_2\zeta_3^2$ with signed integers $c_j$.
For each $j$, only one of $\zeta_3^j$ and $-\zeta_3^j$ can occur.
Vanishing gives $c_0=c_1=c_2$.
Minimality then forces the common coefficient to be $1$ or $-1$.
Thus the only minimal types among sixth roots are $R_2$ and $R_3$.

Suppose that the largest prime dividing $m$ is $7$.
All seven groups are nonempty, and their total weight is at most $8$.
There is a singleton group.
Its sum is a root of unity.
Dividing all group sums by this root makes their common value $1$.
At weight $7$, every group is a singleton and the type is $R_7$.
At weight $8$, exactly one group contains two roots.
If two unit complex numbers sum to $1$, they are
$-\zeta_3$ and $-\zeta_3^2$.
Hence this case has type $(R_7{:}R_3)$.

Suppose instead that the largest prime is $5$.
The five groups have total weight at most $8$, so again one is
a singleton.
Normalize their common sum to $1$.
Each group now consists of sixth roots, because $m/5\mid6$.
It has no nonempty zero-sum submultiset, by minimality of the
original relation.
Appending $-1$ therefore produces a minimal relation:
a proper subrelation omitting $-1$ would already be a zero-sum
submultiset, and one containing $-1$ would leave such a submultiset.
The sixth-root classification shows that each group is either
the singleton $1$ or the pair $-\zeta_3,-\zeta_3^2$.
At most three groups are pairs.
This gives $R_5$ and $(R_5{:}jR_3)$ for $1\le j\le3$.

If the largest prime is at most $3$, then $m\mid6$.
The remaining possibilities are $R_2$ and $R_3$.
\end{proof}

Conway and Jones \cite{CJ76} established the classification
for weight at most $9$, building on Mann \cite{Mann65}.

\begin{theorem}[Collision moduli]\label{S:thm:collset}
Let $q\ge3$. There is a parity collision at $q$ if and only if
at least one of the following conditions holds:
\[
6\mid q \text{ and } q\ge12,\quad
15\mid q,\quad
21\mid q.
\]
For every such $q$, the path $P_{q-1}$ admits no uniform average mixing.
\end{theorem}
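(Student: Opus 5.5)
The proof splits into the two directions of the equivalence; the final sentence then follows from Corollary~\ref{S:cor:collision}.

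\emph{Sufficiency.} Here we exhibit explicit collisions, as in the End Matter. If $6\mid q$ and $q\ge12$, put $h=q/3$, which is even with $h\ge4$. For $1\le k\le h/2$ the product-to-sum formula, with $\sin(\pi/6)=1/2$, gives $\theta_k-\theta_{k+h}=\theta_{h-k}-\theta_{q/2}$. The index differences are $h$ and $h/2+k$, and choosing $k\in\{1,2\}$ makes the second one odd; one checks that all indices lie in $[1,N]$. If $q=15a$ or $q=21a$ with $a$ odd, the identities $(\theta_a-\theta_{3a})=(\theta_{4a}-\theta_{5a})$ and $(\theta_{3a}-\theta_{6a})=(\theta_{7a}-\theta_{9a})$ reduce, since $\theta_{ja}=2\cos(j\pi/15)$ or $2\cos(j\pi/21)$, to the classical relations $2\cos(2\pi/5)-2\cos(\pi/5)+1=0$ and $\cos(\pi/7)-\cos(2\pi/7)+\cos(3\pi/7)=1/2$. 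The index differences $(2a,a)$ and $(3a,2a)$ have opposite parity because $a$ is odd. Even cofactors $a$ give $6\mid q$ with $q\ge30$, which is already covered.

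\emph{Necessity.} Start from a collision and write it as the signed sum $\theta_k-\theta_l-\theta_{k'}+\theta_{l'}=0$. First I would show that no two of the four signed terms cancel. A cancellation $\theta_k=\theta_{k'}$ forces $k=k'$ and then $l=l'$, giving equal differences. A cancellation $\theta_k=-\theta_{l'}$ means $l'=q-k$, which forces $(k',l')=\psi(k,l)$ and again equal differences. The remaining cases are handled the same way. Hence at most one term is $0$, namely the one whose index is $q/2$, and we discard it. With $\zeta=e^{i\pi/q}$, write $\pm\theta_x=\zeta^{x}+\zeta^{-x}$, absorbing signs as $-\zeta^x=\zeta^{x+q}$. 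This gives a relation of weight $8$, or of weight $6$ if a term was discarded, in $2q$-th roots of unity. No antipodal pair can occur: since $\theta$ is injective on $[1,N]$, an antipodal pair forces two signed terms to cancel or one to be zero. It follows that no $R_2$ component appears in the decomposition into minimal relations.

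Next apply Theorem~\ref{S:thm:PR}. The components have weights summing to $6$ or $8$ and are drawn from $R_3,R_5,R_7$ and the mixed types. The possible splittings are $8$ (either $(R_5{:}3R_3)$ or $(R_7{:}R_3)$), $5+3$, $6$ (that is, $(R_5{:}R_3)$), and $3+3$. Weight-$7$ types would need a weight-$1$ companion, which is impossible. A rotated $R_p$ inside the $2q$-th roots requires $p\mid 2q$. A mixed type $(R_p{:}jR_3)$ contains the two roots $\eta'$ and $-\eta\zeta_3$, whose ratio has order $6p$ (or at least order divisible by $3p$), so $3p\mid 2q$. In the cases $8$, $5+3$ and $6$ this yields $15\mid q$ or $21\mid q$; for $5+3$, the $R_5$ and $R_3$ components separately give $5\mid q$ and $3\mid q$. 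In the case $3+3$ we get $3\mid q$, and since the total weight is $6$ a zero term was discarded, so $q/2$ is an index and $q$ is even, giving $6\mid q$. It remains to exclude $q=6$, which I would do by direct listing: the odd-difference gaps are $\{\sqrt3-1,1,\sqrt3+1\}$, the even-difference gaps are $\{\sqrt3,2,2\sqrt3\}$, and these sets are disjoint.

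The main obstacle is making the no-cancellation and no-antipodal claims airtight. Each case uses injectivity of $\theta$ together with $\theta_{q-x}=-\theta_x$, and must show that a cancellation always forces the two pairs to coincide up to $\psi$. Care is also needed so that the root sum has genuinely distinct terms with positive coefficients, which lets Theorem~\ref{S:thm:PR} apply to its minimal components.
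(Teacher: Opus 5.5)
Your proposal is correct and follows the paper's proof essentially step for step. Sufficiency uses the same explicit collisions: $h=q/3$ for $6\mid q$, and the indices $(a,3a,4a,5a)$ and $(3a,6a,7a,9a)$ for odd cofactors of $15$ and $21$. Necessity uses the same chain: the no-cancellation step via $\psi$, the encoding as a weight-$8$ or weight-$6$ relation among $2q$th roots of unity with no antipodal pair, the component splittings $8$, $5+3$, $6$, $3+3$ from Theorem~\ref{S:thm:PR}, and the direct check that excludes $q=6$.
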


\begin{proof}
\emph{Sufficiency.}
Suppose that $15\mid q$ or $21\mid q$ with an even cofactor.
Then $6\mid q$ and $q\ge30$.
Thus, after treating the multiples of $6$, we may assume
that the cofactors in the other two families are odd.

First suppose that $6\mid q$ and $q\ge12$.
Put $h=q/3$. Then $h$ is even and $h\ge4$.
For $1\le k\le h/2$, we have
\[
\begin{aligned}
\theta_k-\theta_{k+h}
&=4\sin\frac{(2k+h)\pi}{2q}\sin\frac{\pi}{6}\\
&=\theta_{h-k}\\
&=\theta_{h-k}-\theta_{q/2}.
\end{aligned}
\]
Here we used $h\pi/(2q)=\pi/6$, $q-h=2h$ and $\theta_{q/2}=0$.
The pair $(k,k+h)$ has even index difference $h$.
The pair $(h-k,q/2)$ has index difference $h/2+k$.
Choose $k\in\{1,2\}$ so that $h/2+k$ is odd.
This gives a parity collision.
Figure~\ref{fig:collision} shows the case $q=12$ and $k=1$.

Next suppose that $q=15a$ with $a$ odd.
Put $\vartheta=a\pi/q=\pi/15$.
We have
\[
\cos\vartheta-\cos4\vartheta
=2\sin\frac{5\vartheta}{2}\sin\frac{3\vartheta}{2}
=\sin\frac{\pi}{10}
=\cos\frac{2\pi}{5}.
\]
Also, $3\vartheta=\pi/5$ and $5\vartheta=\pi/3$.
Therefore
\begin{align*}
&(\theta_a-\theta_{3a})-(\theta_{4a}-\theta_{5a}) \\
&=2(\cos\vartheta-\cos4\vartheta)
  -2\cos3\vartheta+2\cos5\vartheta\\
&=2\cos\frac{2\pi}{5}-2\cos\frac{\pi}{5}+1\\
&=0.
\end{align*}
The last equality follows from
\[
1+2\cos\frac{2\pi}{5}+2\cos\frac{4\pi}{5}=0
\]
and $\cos(4\pi/5)=-\cos(\pi/5)$.
The index differences are $2a$ and $a$.
Since $a$ is odd, this is a parity collision.

Finally suppose that $q=21a$ with $a$ odd.
Then
\begin{align*}
&(\theta_{3a}-\theta_{6a})-(\theta_{7a}-\theta_{9a})\\
&=2\left(
  \cos\frac{\pi}{7}-\cos\frac{2\pi}{7}+\cos\frac{3\pi}{7}
  \right)-1\\
&=0.
\end{align*}
Indeed,
\[
1+2\sum_{j=1}^{3}\cos\frac{2j\pi}{7}=0,
\]
while $\cos(4\pi/7)=-\cos(3\pi/7)$
and $\cos(6\pi/7)=-\cos(\pi/7)$.
The index differences are $3a$ and $2a$.
Thus this is also a parity collision.

\emph{Necessity.}
Let $(k,l)$ and $(k',l')$ form a parity collision.
Then the four signed terms
\[
\theta_k,\quad -\theta_l,\quad -\theta_{k'},\quad \theta_{l'}
\]
sum to $0$.
Recall that $r\mapsto\theta_r$ is injective on $\{1,\dots,N\}$
and that $\theta_{q-j}=-\theta_j$.
We express each signed term in roots of unity and apply the
classification of minimal relations.

(i) \emph{No two of the four terms sum to $0$.}
Suppose otherwise. The remaining two terms also sum to $0$.
Since $\theta_k\ne\theta_l$ and $\theta_{k'}\ne\theta_{l'}$,
we must have
\[
(k',l')=(k,l)
\quad\text{or}\quad
(k',l')=\psi(k,l).
\]
In either case, $l'-k'=l-k$.
This contradicts the assumption of a parity collision.
In particular, at most one of the four indices equals $q/2$.

(ii) \emph{Construction of a relation.}
Each signed term $\pm\theta_j$ can be written as
\[
\zeta_{2q}^{x}+\zeta_{2q}^{-x},
\]
where $x=j$ for the positive sign and $x=j+q$ for the negative sign.
We regard $x$ as an element of $\mathbb Z/2q\mathbb Z$.
The set $\Lambda=\{x,-x\}$ has two elements and avoids $0$ and $q$.

Omit the term with index $q/2$ if it is present.
Collect the sets $\Lambda$ of the remaining terms into a multiset
$\mathcal E$.
Then $\mathcal E$ has weight $8$ or $6$ and satisfies
\[
\sum_{e\in\mathcal E}\zeta_{2q}^{e}=0.
\]

(iii) \emph{No two elements of $\mathcal E$ differ by $q$.}
Suppose that $e\in\Lambda$ and $e+q\in\Lambda'$ for two of the
sets used to construct $\mathcal E$.
Both $\Lambda$ and $\Lambda'+q$ are closed under negation
and contain $e$.
Since $e\ne-e$, both sets equal $\{e,-e\}$.
Hence $\Lambda'=\Lambda+q$.

If $\Lambda'=\Lambda$, then $2x\equiv q\pmod{2q}$.
The corresponding index is therefore $q/2$, which was excluded.
Otherwise, the sets correspond to two different terms.
The second term is the negative of the first, contradicting (i).

(iv) \emph{Divisibility.}
Decompose $\mathcal E$ into minimal relations.
By (iii), none is a rotation of $R_2$.

A rotation of $R_p$ contains two terms with ratio $\zeta_p$.
If its terms are $2q$th roots of unity, then $p\mid2q$.

Now consider a rotated relation of type $(R_p{:}jR_3)$,
where either $p=5$ and $1\le j\le3$, or $p=7$ and $j=1$.
It contains an unchanged term $\eta\zeta_p^{i'}$
and a replacement term $-\eta\zeta_3\zeta_p^i$
with $i\not\equiv i'\pmod p$.
The ratio is $-\zeta_3\zeta_p^{i-i'}$.
Its factors have orders $6$ and $p$.
Since $\gcd(6,p)=1$, their product has order $6p$.
Thus $30\mid2q$ when $p=5$, and $42\mid2q$ when $p=7$.

The multiset $\mathcal E$ has weight $8$ or $6$,
and no minimal component has weight $2$.
By Theorem~\ref{S:thm:PR}, the possible component weights are
\[
8,\quad 5+3,\quad 6,\quad 3+3.
\]

A component of weight $8$ has type $(R_5{:}3R_3)$
or $(R_7{:}R_3)$.
Hence $15\mid q$ or $21\mid q$.

For component weights $5+3$, the components have types $R_5$ and $R_3$.
Thus $5\mid2q$ and $3\mid2q$, giving $15\mid q$.
A component of weight $6$ has type $(R_5{:}R_3)$
and also gives $15\mid q$.

For component weights $3+3$, both components have type $R_3$.
Thus $3\mid q$.
In this case $\mathcal E$ has weight $6$.
By its construction in (ii), one of the four indices equals $q/2$.
Consequently $q$ is even, so $6\mid q$.

It remains to exclude $q=6$.
In this case
\[
(\theta_1,\dots,\theta_5)
=(\sqrt3,1,0,-1,-\sqrt3).
\]
The gaps with odd index difference are
\[
\sqrt3-1,\quad 1,\quad \sqrt3+1.
\]
Those with even index difference are
\[
\sqrt3,\quad 2,\quad 2\sqrt3.
\]
These two sets are disjoint, so there is no parity collision.
\end{proof}

\section{Construction by a phase ramp}\label{S:sec:construction}

\subsection{Torus filling}

For $m\ge1$, let $\zeta_m=e^{2\pi i/m}$. Then
\[
\theta_r=\zeta_{2q}^r+\zeta_{2q}^{-r}.
\]
Let $\mathbb L=\Q(\theta_1)$.
This is the maximal real subfield of $\Q(\zeta_{2q})$.
Its degree is $D=\frac12\varphi(2q)$, where $\varphi$ is Euler's
totient function.

\begin{lemma}\label{S:lem:cheb}
The numbers $1,\theta_1,\dots,\theta_{D-1}$ form a $\Q$-basis
of $\mathbb L$.
\end{lemma}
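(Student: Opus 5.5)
The natural route is through Chebyshev polynomials. Write $\theta_r=2T_r(\theta_1/2)$, where $T_r$ is the Chebyshev polynomial of the first kind. Equivalently, the recurrence $\theta_{r+1}=\theta_1\theta_r-\theta_{r-1}$ with $\theta_0=2$ shows by induction that $\theta_r=P_r(\theta_1)$ for a monic integer polynomial $P_r$ of degree exactly $r$, for every $r\ge1$. We also set $P_0=2$.

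First I would confirm that $\mathbb L=\Q(\theta_1)$ has degree $D=\frac12\varphi(2q)$ over $\Q$. The element $\theta_1=\zeta_{2q}+\zeta_{2q}^{-1}$ generates the maximal real subfield of $\Q(\zeta_{2q})$. This holds because $\zeta_{2q}$ satisfies $X^2-\theta_1X+1=0$ over $\Q(\theta_1)$, so $[\Q(\zeta_{2q}):\Q(\theta_1)]\le2$. Since $\zeta_{2q}$ is non-real for $q\ge2$, the degree is exactly $2$, giving $[\mathbb L:\Q]=\varphi(2q)/2=D$. Consequently $1,\theta_1,\dots,\theta_1^{D-1}$ is a $\Q$-basis of $\mathbb L$.

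Next I would apply a triangular change of basis. For $1\le r\le D-1$, $\theta_r=P_r(\theta_1)$ equals $\theta_1^r$ plus an integer combination of lower powers, and the constant $1$ is unchanged. The transition matrix from $(1,\theta_1,\dots,\theta_1^{D-1})$ to $(1,\theta_1,\theta_2,\dots,\theta_{D-1})$ is therefore unitriangular with integer entries, hence invertible over $\Q$. So these $D$ elements also form a $\Q$-basis of $\mathbb L$.

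The only place needing care is the degree computation, specifically the non-reality of $\zeta_{2q}$ and the identification of $[\Q(\zeta_{2q}):\Q]=\varphi(2q)$, which is the irreducibility of cyclotomic polynomials and may be cited. I would also check that the indices $\theta_r$ for $r\le D-1$ make sense as spectral values in the paper's indexing. This holds because $D-1<q$ (indeed $D\le q-1$ whenever $q\ge3$), and the Chebyshev identity is valid for every integer $r$ regardless.
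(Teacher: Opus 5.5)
Your proposal is correct and follows the paper's proof: the recurrence $\theta_1\theta_r=\theta_{r+1}+\theta_{r-1}$ with $\theta_0=2$ makes each $\theta_r$ a monic integer polynomial of degree $r$ in $\theta_1$, and the resulting unitriangular change of basis from $1,\theta_1,\dots,\theta_1^{D-1}$ gives the claim. The only addition is your explicit check that $[\Q(\theta_1):\Q]=\tfrac12\varphi(2q)$, which the paper simply asserts.
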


\begin{proof}
The identities
\[
\theta_1\theta_r=\theta_{r+1}+\theta_{r-1},
\quad
\theta_0=2
\]
show that each $\theta_r$ with $r\ge1$ is a monic integer polynomial
of degree $r$ in $\theta_1$.
Thus $1,\theta_1,\dots,\theta_{D-1}$ have the same span as
$1,\theta_1,\dots,\theta_1^{D-1}$.
The latter form a $\Q$-basis of $\mathbb L$ because $\theta_1$
has degree $D$ over $\Q$.
\end{proof}

\begin{theorem}\label{S:thm:torus}
Let $q\ge3$.
The numbers $\theta_r$ with $r\in\cR^\times$ are linearly independent
over $\Q$ if and only if $q$ is a power of two, a prime,
or twice a prime.
\end{theorem}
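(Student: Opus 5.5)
The plan has two halves: a dimension count that forces the three families, and a spanning argument, based on Lemma~\ref{S:lem:cheb}, that proves independence for them.

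\emph{Necessity.} All $\theta_r$ with $r\in\cR^\times$ lie in $\mathbb L$, which has dimension $D=\frac12\varphi(2q)$. Independence therefore requires $|\cR^\times|=\lceil q/2\rceil-1\le D$. We split by the $2$-adic valuation of $q$.
\begin{itemize}
\item For odd $q$, $\varphi(2q)=\varphi(q)$ and $|\cR^\times|=(q-1)/2$. The inequality reads $\varphi(q)\ge q-1$, which forces $q$ to be prime.
\item For $q=2c$ with $c\ge3$ odd, $\varphi(2q)=2\varphi(c)$ and $|\cR^\times|=c-1$. The inequality gives $\varphi(c)\ge c-1$, so $c$ is prime. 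For $c=1$ we have $q=2$, which is excluded.
\item For $q=2^ac$ with $a\ge2$ and $c$ odd, $D=2^{a-1}\varphi(c)$ and $|\cR^\times|=2^{a-1}c-1$. The inequality becomes $2^{a-1}(c-\varphi(c))\le1$. Since $a\ge2$, this forces $\varphi(c)=c$, hence $c=1$.
\end{itemize}
These three cases exhaust all $q\ge3$ and leave exactly the listed moduli.

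\emph{Sufficiency.} Here we use Lemma~\ref{S:lem:cheb}, by which $1,\theta_1,\dots,\theta_{D-1}$ is a basis of $\mathbb L$.
\begin{itemize}
\item For $q=2^a$ with $a\ge2$, $D=2^{a-1}=q/2$ and $\cR^\times=\{1,\dots,q/2-1\}=\{1,\dots,D-1\}$. The positive eigenvalues are part of that basis, so they are independent.
\item For $q=p$ an odd prime, $D=(p-1)/2=|\cR^\times|$.
\item For $q=2p$ with $p$ an odd prime, $D=\varphi(4p)/2=p-1=|\cR^\times|$.
\end{itemize}
In the last two cases it suffices to show that the $\theta_r$ with $r\in\cR^\times$ span $\mathbb L$; independence then follows by counting dimensions. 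Lemma~\ref{S:lem:cheb} shows that $\theta_1,\dots,\theta_{D-1}$ together with $1$ span. Each $\theta_r$ with $r\le q-1$ lies in the span of the positive ones, by $\theta_{q-r}=-\theta_r$, and for even $q$ also $\theta_{q/2}=0$. So it remains to express $1$ in that span.

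\emph{Expressing $1$.} The sum of the primitive $p$th roots of unity is $-1$. These roots are $\zeta_{2q}^{2qj/p}$ for $1\le j\le p-1$, and $2q/p\in\{2,4\}$. Pairing each root with its conjugate gives
\[
1=-\sum_{j=1}^{(p-1)/2}\theta_{2qj/p}.
\]
The indices $2qj/p$ lie in $[1,q-1]$, so each term can be folded into $\pm\theta_r$ with $r\in\cR^\times\cup\{q/2\}$. Hence $1$ lies in the span of the positive eigenvalues.

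\emph{Main obstacle.} The step needing most care is checking that the folded indices never hit a term that cannot be handled; the zero mode is harmless since $\theta_{q/2}=0$. I also need to confirm that the degree counts are correct in the small cases $q=3,4,6$. If the pairing argument fails for $q=2p$, I would instead use the relation $\theta_{r}=-\theta_{q-r}$ together with the basis structure to express $\theta_{D}=\theta_{p-1}$ directly.
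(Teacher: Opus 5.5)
Your proposal is correct and follows essentially the same route as the paper: the same bound $|\cR^\times|\le D$ split by the $2$-adic valuation of $q$, the same use of Lemma~\ref{S:lem:cheb}, and the same identity $1=-\sum_{j=1}^{(p-1)/2}\theta_{2qj/p}$ to put $1$ in the span. The only divergence is cosmetic: you allow folding onto $\theta_{q/2}=0$, whereas the paper notes that $2qj/p$ never equals $q/2$, and in fact your fallback plan is unnecessary because the pairing argument goes through as written.
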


\begin{proof}
\emph{Necessity.}
Independence forces $|\cR^\times|\le D$.

For odd $q$, this inequality reads $\varphi(q)\ge q-1$.
Hence $q$ is prime.

For $q=2c$ with $c\ge3$ odd, it reads $\varphi(c)\ge c-1$.
Hence $c$ is prime.

For $q=2^ac$ with $a\ge2$ and $c$ odd, it reads
\[
2^{a-1}\bigl(c-\varphi(c)\bigr)\le1.
\]
This forces $c=1$.

\emph{Sufficiency.}
If $q=2^a$, then $D=|\cR^\times|+1$.
The result follows from Lemma~\ref{S:lem:cheb}.

It remains to consider $q=p$ and $q=2p$ with $p$ an odd prime.
In these cases $D=|\cR^\times|$.
Let $\mathcal V$ be the $\Q$-span of the numbers $\theta_r$
with $r\in\cR^\times$.
It suffices to show that $1\in\mathcal V$.
Indeed, Lemma~\ref{S:lem:cheb} then gives $\mathcal V=\mathbb L$.
The $D$ spanning numbers are therefore independent.

Since $\zeta_p=\zeta_{2q}^{2q/p}$, pairing $j$ with $p-j$ in
$\sum_{j<p}\zeta_p^{\,j}=0$ gives
\[
1=-\sum_{j=1}^{(p-1)/2}\theta_{2qj/p}.
\]
Each index $2qj/p$ lies strictly between $0$ and $q$
and differs from $q/2$.
The identity $\theta_{q-r}=-\theta_r$ therefore rewrites each
summand as $\pm\theta_{r'}$ with $r'\in\cR^\times$.
Thus $1\in\mathcal V$.
\end{proof}

\begin{lemma}\label{S:lem:kronecker}
Let $\T=\R/2\pi\Z$, and let $\cT\subseteq\T^{\cR}$ be the closure of
\[
\{(t\theta_r)_{r\in\cR}:t\in\R\}.
\]
Suppose that $q$ is a power of two, a prime, or twice a prime.
Then
\[
\cT=\{\vartheta\in\T^{\cR}:\vartheta_{q/2}=0\},
\]
where the coordinate condition is omitted for odd $q$.
In particular, $\cT$ contains the \emph{phase ramp}
\[
\gamma(\xi)=(\alpha_r\xi)_{r\in\cR},
\quad
\alpha_r=\frac q2-r,
\quad
\xi\in\R/4\pi\Z.
\]
\end{lemma}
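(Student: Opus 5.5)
The plan is to identify $\cT$ with the closure of a one-parameter subgroup of a torus, using the standard form of Kronecker's theorem: the closure of $\{t\omega:t\in\R\}$ in $\T^n$ is the subtorus annihilated by all integer vectors $\ell$ with $\langle\ell,\omega\rangle=0$. For odd $q$ we have $\cR=\cR^\times$, and Theorem~\ref{S:thm:torus} says that the frequencies $\theta_r$ with $r\in\cR$ are linearly independent over $\Q$. Then no nonzero integer relation exists, and the closure is all of $\T^{\cR}$. For even $q$ (a power of two or twice a prime), $\cR=\cR^\times\cup\{q/2\}$ and $\theta_{q/2}=0$. So the coordinate $\vartheta_{q/2}=t\cdot0$ is identically $0$. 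The remaining coordinates are indexed by $\cR^\times$, again independent by Theorem~\ref{S:thm:torus}, so their orbit is dense in $\T^{\cR^\times}$. This shows $\cT=\{\vartheta_{q/2}=0\}$: the inclusion $\subseteq$ holds because the condition is closed and satisfied on the orbit, and $\supseteq$ follows from density in the other coordinates.

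To keep things self-contained, I will justify the Kronecker step via Weyl's criterion. For a nonzero integer vector $\ell\in\Z^{\cR^\times}$, the frequency $\langle\ell,\theta\rangle$ is nonzero by independence. Hence the time average $\frac1L\int_0^Le^{it\langle\ell,\theta\rangle}dt\to0$, so the orbit equidistributes with respect to Haar measure on $\T^{\cR^\times}$ and is therefore dense. This continuous-time version is exactly \cite{HW08}, so I would simply cite it.

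For the phase ramp, I check two things. First, $\gamma$ is well defined on $\R/4\pi\Z$. Each $\alpha_r=q/2-r$ is an integer or a half-integer, so $\alpha_r\cdot4\pi\in2\pi\Z$, and shifting $\xi$ by $4\pi$ fixes every coordinate in $\T$. Second, $\gamma(\xi)\in\cT$. For odd $q$ this is automatic. For even $q$ the coordinate $r=q/2$ has $\alpha_{q/2}=0$, so $\gamma(\xi)_{q/2}=0$ as required.

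I expect no serious obstacle. The only delicate point is that for even $q$ the index set $\cR$ includes the zero mode while Theorem~\ref{S:thm:torus} concerns only $\cR^\times$. This is handled by splitting off that coordinate as above and noting that every index in $\cR^\times$ has a positive, independent frequency.
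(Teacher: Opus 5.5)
Your proposal is correct and follows the paper's proof essentially step for step. You split off the identically vanishing zero-mode coordinate for even $q$, apply Theorem~\ref{S:thm:torus} together with Kronecker's theorem to the coordinates in $\cR^\times$, and use $\alpha_{q/2}=0$ to place the ramp in $\cT$. Your additional checks are sound, namely the Weyl-criterion justification of density and the well-definedness of $\gamma$ on $\R/4\pi\Z$ from the half-integrality of the $\alpha_r$; the paper leaves both implicit.
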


\begin{proof}
For even $q$, the coordinate $t\theta_{q/2}$ vanishes.
The remaining coordinates are dense in $\T^{\cR^\times}$
by Theorem~\ref{S:thm:torus} and Kronecker's theorem
\cite[Chap.~XXIII]{HW08}.
For odd $q$, this gives density in all of $\T^{\cR}$.
The phase ramp belongs to $\cT$ because $\alpha_{q/2}=0$
when $q$ is even.
\end{proof}

\subsection{The slope density}

We take densities on $\R/4\pi\Z$ with respect to normalized Haar
measure.
For integers $j$, write
\[
\widehat f(j)=\int f(\xi)e^{-ij\xi}\,d\xi,
\]
where $d\xi$ denotes this normalized measure.
Only integer frequencies are needed below.

\begin{proposition}\label{S:prop:measure}
Let $q\ge4$ and $g=\lfloor\frac{N-1}{2}\rfloor\ge1$.
For $\lambda>0$, put
\begin{equation}\label{S:eq:density}
f_\lambda(\xi)
=1-\frac{2\lambda}{N}\sum_{j=1}^g\cos(2j\xi).
\end{equation}
Then
\[
\widehat{f_\lambda}(0)=1,
\quad
\widehat{f_\lambda}(j)=\lambda\varkappa(j)
\quad (1\le|j|\le N-1).
\]
Moreover,
\[
\min f_\lambda=1-\frac{2g\lambda}{N}.
\]
Thus $f_\lambda$ is strictly positive exactly when
$\lambda<\frac{N}{2g}$.
This upper bound exceeds $1$.
\end{proposition}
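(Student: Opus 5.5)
The claims are direct Fourier computations on $\R/4\pi\Z$ with normalized Haar measure. The characters are $e^{ij\xi/2}$, but we only need integer $j$. The plan is to write each cosine as a sum of exponentials and read off coefficients.

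\emph{Fourier coefficients.} For an integer $m\neq0$, $\int e^{im\xi}\,d\xi=0$, because $e^{im\xi}$ is a nontrivial character of $\R/4\pi\Z$ (its period $2\pi/|m|$ divides $4\pi$). Expanding
\[
\cos(2j\xi)=\tfrac12\bigl(e^{2ij\xi}+e^{-2ij\xi}\bigr)
\]
gives the following.
\begin{itemize}
\item $\widehat{f_\lambda}(0)=1$.
\item $\widehat{f_\lambda}(\pm 2j)=-\lambda/N$ for $1\le j\le g$.
\item All other integer coefficients vanish.
\end{itemize}
Now compare with $\lambda\varkappa(j)$ for $1\le|j|\le N-1$.
\begin{itemize}
\item Odd $j$: both sides are $0$.
\item Even $j=\pm2i$: the range condition $|j|\le N-1$ is equivalent to $i\le\lfloor (N-1)/2\rfloor=g$, so both sides equal $-\lambda/N$.
\end{itemize}
This is the only place where the choice of $g$ matters. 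It is the one step to check carefully: no even $|j|\le N-1$ may be missed, and no extra frequency may appear inside the range.

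\emph{Minimum.} Since $\cos(2j\xi)\le1$, we have $f_\lambda\ge 1-2g\lambda/N$. Equality holds at $\xi=0$, so
\[
\min f_\lambda=1-\frac{2g\lambda}{N}.
\]
Strict positivity is therefore equivalent to $\lambda<N/(2g)$.

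\emph{The bound exceeds $1$.} We need $N>2g$. Since $2g\le N-1<N$, this holds. Note also that $g\ge1$ requires $N\ge3$, i.e.\ $q\ge4$, which is the standing assumption.

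The only real obstacle is the bookkeeping of the index range $g$ versus $N-1$.
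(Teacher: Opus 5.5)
Your proposal is correct and follows the paper's proof: you read off the coefficients by orthogonality of integer-frequency characters on $\R/4\pi\Z$, using that $\{2j:1\le j\le g\}$ is exactly the set of even integers in $[1,N-1]$, and you locate the minimum at $\xi=0$, where every cosine equals $1$. You then get the bound exceeding $1$ from $2g\le N-1<N$, which is the paper's observation that $2g+1\in\{N-1,N\}$.
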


\begin{proof}
The frequencies $2j$ with $1\le j\le g$ are precisely the even
integers in $[1,N-1]$.
Orthogonality gives the stated Fourier coefficients.
The Dirichlet kernel
\[
1+2\sum_{j=1}^g\cos(2j\xi)
\]
has maximum $2g+1$ at $\xi=0$.
This gives the formula for $\min f_\lambda$.
Finally, $2g+1\in\{N-1,N\}$, so $2g<N$.
\end{proof}

\begin{proposition}\label{S:prop:phases}
Let $q\ge4$ be as in Lemma~\ref{S:lem:kronecker},
and let $0<\lambda<\frac{N}{2g}$.
Put
\[
\Pi_{rs}=[r=s=q/2],
\quad
(C^\circ,S^\circ)
=\left(\frac12I+\frac12\Pi,\frac12I-\frac12\Pi\right).
\]
This is the moment pair of Haar measure on $\cT$.

Suppose that $\xi$ has density $f_\lambda$, and put
$\Phi=\gamma(\xi)$.
Then $\Phi$ takes values in $\cT$ and satisfies
\[
\bigl(C[\Phi],S[\Phi]\bigr)
=(1-\lambda)(C^\circ,S^\circ)
+\lambda(C^\star,S^\star),
\]
where
\[
C[\Phi]_{rs}=\mathbb E[\cos\Phi_r\cos\Phi_s],
\quad
S[\Phi]_{rs}=\mathbb E[\sin\Phi_r\sin\Phi_s].
\]
\end{proposition}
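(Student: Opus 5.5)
We will prove Proposition~\ref{S:prop:phases} through the admissible-function parametrization of moment pairs. Any random phase vector $\Phi\in\cT$ can be extended to all modes by $\Phi_{q-a}=-\Phi_a$, and for even $q$ we also have $\Phi_{q/2}=0$. Put
\[
y_\Phi(a,b)=\mathbb E\cos(\Phi_a-\Phi_b).
\]
Exactly as for $y_t$, this function is admissible. The product-to-sum identities give $(C^{y_\Phi},S^{y_\Phi})=(C[\Phi],S[\Phi])$. Since $y\mapsto(C^y,S^y)$ is affine and an admissible function is determined by its pair, it suffices to prove
\[
y_\Phi=(1-\lambda)y^\circ+\lambda y^\star,
\]
where $y^\circ(a,b)=[a=b]$ is the admissible function whose pair is $(C^\circ,S^\circ)$.

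We first handle Haar measure. Under Haar measure on $\cT$, the coordinates $\Phi_r$ with $r\in\cR^\times$ are independent and uniform, and $\Phi_{q/2}=0$. Hence
\[
\mathbb E\cos\Phi_r\cos\Phi_s=\tfrac12[r=s],\qquad \mathbb E\sin\Phi_r\sin\Phi_s=\tfrac12[r=s]
\]
for $r,s\in\cR^\times$. The zero-mode entries are $C_{zz}=1$ and $S_{zz}=0$, with vanishing cross terms. This is precisely $(\frac12I+\frac12\Pi,\frac12I-\frac12\Pi)$. Equivalently, $y^\circ(a,b)=[a=b]$: for $a\ne b$ the difference $\Phi_a-\Phi_b$ is either $\pm2\Phi_r$ or a difference of independent uniform coordinates, so it has zero mean cosine. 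One can check that $y^\circ$ is admissible.

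Next we treat the ramp. Lemma~\ref{S:lem:kronecker} gives $\gamma(\xi)\in\cT$. The extension to all modes is consistent with the ramp formula, because $\alpha_{q-a}=a-q/2=-\alpha_a$, and $\alpha_{q/2}=0$. Therefore $\Phi_a=\alpha_a\xi$ for every $1\le a\le N$, and
\[
\Phi_a-\Phi_b=(b-a)\xi,\qquad y_\Phi(a,b)=\mathbb E\cos((b-a)\xi)=\operatorname{Re}\widehat{f_\lambda}(b-a).
\]
Here $\xi$ lives on $\R/4\pi\Z$, where the $\alpha_a$ may be half-integers, so some care is needed. The difference $b-a$ is an integer, so only integer Fourier coefficients enter, as the paper anticipates. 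The remaining step is to apply Proposition~\ref{S:prop:measure}:
\[
y_\Phi(a,a)=1,\qquad y_\Phi(a,b)=\lambda\varkappa(b-a)\quad\text{for }1\le|b-a|\le N-1.
\]
This equals $(1-\lambda)[a=b]+\lambda y^\star(a,b)$, which gives the claim.

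The main obstacle is bookkeeping rather than analysis. We must verify that the moment pair of the torus-valued $\Phi$ really equals the pair of $y_\Phi$, including the zero-mode conventions in \eqref{S:eq:pairy}: $C_{zs}=y(z,s)$ and $S_{zs}=0$. We must also confirm that the $4\pi$-periodicity causes no trouble. To do this, we will recompute $C[\Phi]_{rs}$ and $S[\Phi]_{rs}$ directly as
\[
\tfrac12\bigl(\mathbb E\cos(\Phi_r-\Phi_s)\pm\mathbb E\cos(\Phi_r+\Phi_s)\bigr),
\]
and note that $\Phi_r+\Phi_s=\Phi_r-\Phi_{q-s}$. This uses only integer multiples of $\xi$ and matches \eqref{S:eq:pairy}.
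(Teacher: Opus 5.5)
Your proposal is correct and is essentially the paper's proof. The paper computes $C[\Phi]_{rs}=\tfrac12\bigl(\widehat{f_\lambda}(s-r)+\widehat{f_\lambda}(q-r-s)\bigr)$ and $S[\Phi]_{rs}=\tfrac12\bigl(\widehat{f_\lambda}(s-r)-\widehat{f_\lambda}(q-r-s)\bigr)$ directly and matches them with \eqref{S:eq:pairy}, which is your final ``recompute'' step; lifting to $y_\Phi(a,b)=\widehat{f_\lambda}(b-a)$ only repackages this, with the paper's zero-argument corrections ($r=s$, or $r=s=q/2$) becoming the diagonal $a=b$, and you need only the linearity of $y\mapsto(C^y,S^y)$, not the fact that an admissible function is determined by its pair.
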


\begin{proof}
Lemma~\ref{S:lem:kronecker} gives $\Phi\in\cT$.
Also,
\[
\Phi_r\pm\Phi_s=(\alpha_r\pm\alpha_s)\xi,
\]
The coefficients satisfy $\alpha_r-\alpha_s=s-r$ and
$\alpha_r+\alpha_s=q-r-s$.
Since $f_\lambda$ is even, the product-to-sum identities give
\[
\begin{aligned}
C[\Phi]_{rs}
&=\frac12\left(
\widehat{f_\lambda}(s-r)
+\widehat{f_\lambda}(q-r-s)\right),\\
S[\Phi]_{rs}
&=\frac12\left(
\widehat{f_\lambda}(s-r)
-\widehat{f_\lambda}(q-r-s)\right).
\end{aligned}
\]
By \eqref{S:eq:pairy}, these equal $\lambda$ times the target
moments whenever both arguments are nonzero.
The argument $s-r$ vanishes when $r=s$.
The argument $q-r-s$ vanishes only when $r=s=q/2$.
At these entries, the value $\widehat{f_\lambda}(0)=1$
gives the correction $(1-\lambda)(C^\circ,S^\circ)$.

Under Haar measure on $\cT$, the coordinates
$\vartheta_r$ with $r\ne q/2$ are independent and uniform.
For even $q$, the remaining coordinate is $\vartheta_{q/2}=0$.
Their moments are exactly $(C^\circ,S^\circ)$.
\end{proof}

\subsection{Uniform average mixing}

\begin{theorem}\label{S:thm:main}
Let $q$ be a power of two at least four, an odd prime,
or twice an odd prime.
Then $P_{q-1}$ admits uniform average mixing under a probability
measure with at most $|\cR|^2+1$ atoms.
It also admits uniform average mixing under an absolutely
continuous probability measure.
\end{theorem}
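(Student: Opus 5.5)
By Corollary~\ref{S:cor:reduction}, it suffices to show that the target pair $(C^\star,S^\star)$ lies in $\conv F(\R)$. The remaining work is then to extract a finitely supported measure and an absolutely continuous one from this membership. The argument takes place in the affine space $\mathcal A_0=\{(C,S):C_{rr}+S_{rr}=1\}$ of dimension $|\cR|^2$. This space contains $F(\R)$, and it also contains the target, since $\varkappa(0)=1$.

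\emph{Step 1: from times to phases.} Define $\Psi(\vartheta)=\bigl((\cos\vartheta_r\cos\vartheta_s),(\sin\vartheta_r\sin\vartheta_s)\bigr)$ on the torus $\cT$ of Lemma~\ref{S:lem:kronecker}. Then $F(t)=\Psi((t\theta_r)_r)$. Since the orbit is dense in $\cT$ and $\Psi$ is continuous, $F(\R)$ is dense in $\Psi(\cT)$. Hence $\operatorname{cl}\conv F(\R)=\conv\Psi(\cT)$, where the right-hand side is compact because $\cT$ is compact. By \cite[Theorem~6.3]{Rock70}, $\relint\conv F(\R)=\relint\conv\Psi(\cT)$. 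It therefore suffices to place the target in $\relint\conv\Psi(\cT)$.

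\emph{Step 2: interior point.} Haar measure on $\cT$ has full support. By Lemma~\ref{S:lem:bary}, its moment pair $(C^\circ,S^\circ)$ lies in $\relint\conv\Psi(\cT)$. Fix $1<\lambda<N/(2g)$; this is possible by Proposition~\ref{S:prop:measure}. By Proposition~\ref{S:prop:phases}, the law of $\gamma(\xi)$ under $f_\lambda$ has moments $P_\lambda=(1-\lambda)(C^\circ,S^\circ)+\lambda(C^\star,S^\star)$. Lemma~\ref{S:lem:bary} places $P_\lambda$ in $\conv\Psi(\cT)$. Solving for the target gives
\[
(C^\star,S^\star)=(1-\lambda^{-1})(C^\circ,S^\circ)+\lambda^{-1}P_\lambda .
\]
This is a convex combination that puts positive weight on a relative interior point, with the other endpoint in the closed convex set. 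By \cite[Theorem~6.1]{Rock70}, such a point lies in the relative interior. Consequently $(C^\star,S^\star)\in\relint\conv F(\R)\subseteq\conv F(\R)$.

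\emph{Step 3: atoms.} The target is a convex combination of finitely many points $F(t_j)$. All of these lie in $\mathcal A_0$, whose dimension is at most $|\cR|^2$. Carathéodory's theorem in this affine space reduces the combination to at most $|\cR|^2+1$ points. The measure $\sum_j\mu_j\delta_{t_j}$ then has moment pair $(C^\star,S^\star)$, and Proposition~\ref{S:prop:target} gives $\Mh=J/N$.

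\emph{Step 4: absolute continuity.} Let $\mathcal K_{ac}$ be the set of moment pairs $\int F\,d\nu$ over absolutely continuous probability measures $\nu$. This set is convex, and it lies in $\conv\Psi(\cT)$ by Lemma~\ref{S:lem:bary}. The uniform law on $[t,t+\varepsilon]$ has moments tending to $F(t)$ as $\varepsilon\to0$, by continuity of $F$. Hence $\operatorname{cl}\mathcal K_{ac}\supseteq F(\R)$, and therefore $\operatorname{cl}\mathcal K_{ac}=\conv\Psi(\cT)$. Applying \cite[Theorem~6.3]{Rock70} again, $\relint\mathcal K_{ac}=\relint\conv\Psi(\cT)$, so this set contains the target by Step 2.

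The main obstacle is Step 2: we need the target in the \emph{relative} interior, not merely in the closure. This is exactly why the ramp must overshoot with $\lambda>1$, and it depends on Proposition~\ref{S:prop:measure} providing a positivity range that extends beyond $1$. One subtlety must be checked. The affine hulls of $\conv F(\R)$ and of $\conv\Psi(\cT)$ coincide because one set is dense in the other, and all the relative-interior statements are taken within this common hull. The case $q=3$ is excluded by hypothesis, so nothing further is needed there.
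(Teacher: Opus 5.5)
For $q\ge4$ your argument is the paper's proof step for step. It uses the same identification $\operatorname{cl}\conv F(\R)=\conv\Psi(\cT)$ and the same overshooting ramp with $1<\lambda<N/(2g)$, combined with Rockafellar's Theorems 6.1 and 6.3. It then applies Carath\'eodory in the $|\cR|^2$-dimensional affine space and uses shrinking uniform laws for the absolutely continuous case.

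The gap is your closing remark that $q=3$ is excluded by hypothesis. It is not: $3$ is an odd prime, so $P_2$ is covered by the statement. At $q=3$ your Step 2 breaks down. Here $N=2$ gives $g=0$, Propositions~\ref{S:prop:measure} and~\ref{S:prop:phases} are stated only for $q\ge4$, and the range $1<\lambda<N/(2g)$ is undefined.

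The paper handles this case separately. On $P_2$ one has $M(\pi/4)=\tfrac12J$, so $\delta_{\pi/4}$ works with a single atom (and $|\cR|^2+1=2$). The uniform law on $[0,\pi]$ gives an absolutely continuous example, since $\cos^2t$ and $\sin^2t$ both average to $\tfrac12$.

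You can also close the case within your own framework. For $q=3$ the target pair is $(C^\star,S^\star)=(\tfrac12,\tfrac12)$, because $\varkappa(1)=0$. This coincides with the Haar moment pair $(C^\circ,S^\circ)$. Lemma~\ref{S:lem:bary} therefore already places it in $\relint\conv\Psi(\cT)$, and your Steps 1, 3 and 4 go through without the ramp.
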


\begin{proof}
For $q=3$, we have $M(\pi/4)=\frac12J$ on $P_2$.
The uniform distribution on $[0,\pi]$ also gives uniform
average mixing.

Now let $q\ge4$.
Define
\[
\widetilde F(\vartheta)
=\left(
(\cos\vartheta_r\cos\vartheta_s)_{r,s},
(\sin\vartheta_r\sin\vartheta_s)_{r,s}
\right)
\]
on $\T^{\cR}$.
Then $F(t)=\widetilde F((t\theta_r)_r)$.
Put
\[
\cK=\conv F(\R),
\quad
\cK_{\cT}=\conv\widetilde F(\cT).
\]
The set $\cK_{\cT}$ is compact.
Since the flow is dense in $\cT$, the set $F(\R)$ is dense
in $\widetilde F(\cT)$.
Consequently
\[
\operatorname{cl}\cK=\cK_{\cT},
\quad
\relint\cK=\relint\cK_{\cT},
\]
where the second equality follows from
\cite[Theorem~6.3]{Rock70}.

Apply Lemma~\ref{S:lem:bary} to $\widetilde F$ and Haar measure
on $\cT$.
It gives
\[
(C^\circ,S^\circ)\in\relint\cK_{\cT}.
\]
Choose $1<\lambda<\frac{N}{2g}$.
Such a choice is possible by Proposition~\ref{S:prop:measure}.
Proposition~\ref{S:prop:phases} and Lemma~\ref{S:lem:bary} give
\[
\Gamma_\lambda
=(1-\lambda)(C^\circ,S^\circ)
+\lambda(C^\star,S^\star)
\in\cK_{\cT}.
\]
Rearranging yields
\[
(C^\star,S^\star)
=\left(1-\frac1\lambda\right)(C^\circ,S^\circ)
+\frac1\lambda\Gamma_\lambda.
\]
Since $0<1/\lambda<1$, Theorem~6.1 of \cite{Rock70} places
this pair in $\relint\cK_{\cT}$.
Hence $(C^\star,S^\star)\in\cK$.

The set $F(\R)$ lies in an affine space of dimension at most
$|\cR|^2$.
Carath\'eodory's theorem \cite[Theorem~17.1]{Rock70} therefore gives
\[
(C^\star,S^\star)=\sum_j\mu_jF(t_j)
\]
as a convex combination with at most $|\cR|^2+1$ terms.
The probability measure
\[
\mu=\sum_j\mu_j\delta_{t_j}
\]
then satisfies $\Mh[\mu]=\frac1NJ$
by Proposition~\ref{S:prop:target}.

For the absolutely continuous case, let $\cK_{\mathrm{ac}}$
be the set of vectors $\int F\,d\mu$ obtained from absolutely
continuous probability measures $\mu$.
This set is convex.
It is contained in $\cK$ by Lemma~\ref{S:lem:bary}.

Fix $t\in\R$.
The uniform distributions on $[t-1/j,t+1/j]$ give points of
$\cK_{\mathrm{ac}}$ converging to $F(t)$ as $j\to\infty$.
Thus the closed convex set $\operatorname{cl}\cK_{\mathrm{ac}}$
contains $\operatorname{cl}\cK$.
The reverse inclusion follows from $\cK_{\mathrm{ac}}\subseteq\cK$.
Therefore
\[
\operatorname{cl}\cK_{\mathrm{ac}}=\cK_{\cT}.
\]
By \cite[Theorem~6.3]{Rock70},
\[
\relint\cK_{\mathrm{ac}}=\relint\cK_{\cT}.
\]
This set contains $(C^\star,S^\star)$.
An absolutely continuous probability measure therefore realizes
the target moments.
Proposition~\ref{S:prop:target} gives uniform average mixing.
\end{proof}

The proof is nonconstructive.
It guarantees the existence of readout times and a density
without specifying them.
The measures may be discrete, as in the sampling distributions
of \cite[Definition~2]{BCM24}.
They may also be absolutely continuous.
The obstructions in Section~\ref{S:sec:collisions} apply to
arbitrary Borel probability measures.

\section{Relation to pretty good state transfer}\label{S:sec:pgst}

Let $\cT'\subseteq\T^N$ be the closure of
\[
\{(t\theta_a)_{1\le a\le N}:t\in\R\}.
\]
Define $\gamma^\ast\in\T^N$ by
\[
\gamma^\ast_a=\left(\frac q2-a\right)\pi.
\]
Since $\theta_{q-a}=-\theta_a$, every point of $\cT'$ satisfies
$\vartheta_{q-a}=-\vartheta_a$.
The point $\gamma^\ast$ satisfies the same identities.
Restriction to the coordinates in $\cR$ identifies $\cT'$
with $\cT$.
Under this coordinate restriction, $\gamma^\ast$ maps to
$\gamma(\pi)$ (Fig.~\ref{fig:torus}).

\begin{proposition}\label{S:prop:pgst}
\begin{enumerate}[label=(\alph*),leftmargin=1.8em,itemsep=1pt,topsep=2pt]
\item
PGST between the end vertices of $P_{q-1}$ occurs if and only if
$\gamma^\ast+c\mathbf1\in\cT'$ for some $c\in\T$.
It occurs at every modulus in Theorem~\ref{S:thm:main}.
At these moduli, suitable times satisfy $U(t_j)\to-i^q\Iop$.

\item
For every $q$, the coherences of every uniformly mixing measure
are invariant under the sign change
\[
y(a,b)\mapsto(-1)^{a-b}y(a,b).
\]
The phase law of Proposition~\ref{S:prop:phases} is invariant
under translation by $\gamma(\pi)$.

\item
If there is a parity collision at $q$, then there is no PGST
between the end vertices of $P_{q-1}$.
\end{enumerate}
\end{proposition}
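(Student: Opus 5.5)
For part (a), I will start from $U(t)=\sum_a e^{it\theta_a}E_a$ and the identity $\Iop=\sum_a(-1)^{a-1}E_a$ from Lemma~\ref{S:lem:alt}. PGST between the ends means $|U(t_j)_{1N}|\to1$. By unitarity and compactness, this is equivalent to a subsequence with $U(t_j)e_1\to e^{ic'}e_N$, that is, $U(t_j)\to$ a phase times an operator agreeing with $\Iop$ on $e_1$. Every $(v_a)_1\neq0$, since $\sin(a\pi/q)\ne0$. So the condition is equivalent to $e^{it_j\theta_a}\to e^{ic'}(-1)^{a-1}$ for all $a$, and then $U(t_j)\to e^{ic'}\Iop$ in full.

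In phase language this says that $(\pi(a-1)+c')_a$ lies in $\cT'$. Since $\gamma^\ast_a=(q/2-a)\pi\equiv\pi(a-1)+(q/2-1)\pi$, it is the same as $\gamma^\ast+c\mathbf1\in\cT'$ with $c=c'-(q/2-1)\pi$. At the moduli of Theorem~\ref{S:thm:main}, Lemma~\ref{S:lem:kronecker} gives $\gamma(\pi)\in\cT$, so $c=0$ works. Then $e^{it_j\theta_a}\to e^{i(q/2-a)\pi}=i^q(-1)^a$, and the limit is $\sum_a i^q(-1)^aE_a=-i^q\Iop$. The only care needed is the identification of $\cT'$ with $\cT$ via $\vartheta_{q-a}=-\vartheta_a$, which the text preceding the proposition provides. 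One must also check that the zero mode is consistent: $\gamma^\ast_{q/2}=0$.

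For part (b), the first claim is immediate from Theorem~\ref{S:thm:forced}. The forced value is $0$ when $b-a$ is odd, so the sign change fixes it, and it trivially fixes the values at even differences. For the second claim, translating $\xi\mapsto\xi+\pi$ adds $\alpha_r\pi$ to $\Phi_r$, so $\gamma(\xi+\pi)=\gamma(\xi)+\gamma(\pi)$. The density $f_\lambda$ involves only $\cos(2j\xi)$, so it is $\pi$-periodic, and the law of $\xi$ on $\R/4\pi\Z$ is invariant under this shift. Hence the law of $\Phi$ is invariant under translation by $\gamma(\pi)$.

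Part (c) is the main step. Take a collision with integer vector $\boldsymbol\ell=\mathbf e_k-\mathbf e_l-\mathbf e_{k'}+\mathbf e_{l'}$. Then $\sum\ell_a\theta_a=0$, so every point of the flow, and hence of its closure $\cT'$, satisfies $\sum\ell_a\vartheta_a\equiv0\pmod{2\pi}$. Because $\sum\ell_a=0$, the shift $c\mathbf1$ contributes nothing. Evaluating at $\gamma^\ast$ gives
\[
\sum\ell_a\gamma^\ast_a=-\pi\bigl(k-l-k'+l'\bigr)=\pi\bigl((l-k)-(l'-k')\bigr),
\]
which is an odd multiple of $\pi$ by the parity assumption. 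So $\gamma^\ast+c\mathbf1\notin\cT'$ for every $c$, and by (a) there is no PGST. The only subtlety is that indices may coincide, for example $k=k'$. The vector $\boldsymbol\ell$ still obeys both linear identities, so the argument is unaffected.
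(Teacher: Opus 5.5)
Your proposal is correct and follows essentially the same route as the paper. Part (a) expands in the sine eigenvectors with $(v_a)_1\ne0$ to reduce end-to-end PGST to phase convergence toward a shifted $\gamma^\ast$ (your $c=c'-(q/2-1)\pi$ matches the paper's $e^{ic}=-i^{-q}\lambda$), then takes $c=0$ via Lemma~\ref{S:lem:kronecker} to get the limit $-i^q\Iop$; part (b) uses the vanishing of $\varkappa$ at odd arguments and the $\pi$-periodicity of $f_\lambda$; part (c) evaluates the collision vector $\boldsymbol{\ell}$ on $\gamma^\ast+c\mathbf1$. Your added care about extracting a subsequence from $|U(t_j)_{1N}|\to1$ and about coinciding indices in $\boldsymbol{\ell}$ fills small details that the paper leaves implicit.
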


\begin{proof}
(a)
PGST between the ends means
\[
U(t_j)e_1\to\lambda e_N
\]
for some times $t_j$ and some $\lambda$ with $|\lambda|=1$.
The eigenvectors satisfy
\[
(v_a)_N=(-1)^{a-1}(v_a)_1\ne0.
\]
Expansion in these eigenvectors therefore gives the equivalent
condition
\[
e^{it_j\theta_a}\to\lambda(-1)^{a-1}
\quad\text{for every }a.
\]
Since $e^{i\gamma^\ast_a}=i^q(-1)^a$, this is equivalent to
\[
(t_j\theta_a)_a\to\gamma^\ast+c\mathbf1,
\quad
e^{ic}=-i^{-q}\lambda.
\]
This proves the first assertion.

At the moduli of Theorem~\ref{S:thm:main},
Lemma~\ref{S:lem:kronecker} gives $\gamma^\ast\in\cT'$.
We may therefore take $c=0$.
Lemma~\ref{S:lem:alt} then gives $U(t_j)\to-i^q\Iop$.

(b)
Translation by $\gamma^\ast$ multiplies
$\cos(\vartheta_a-\vartheta_b)$ by $(-1)^{a-b}$.
The target $y^\star$ is fixed by this sign change because
$\varkappa$ vanishes at odd arguments.
The first assertion follows from Theorem~\ref{S:thm:forced}.

For the second assertion, the density $f_\lambda$ has period $\pi$.
Also,
\[
\gamma(\xi+\pi)=\gamma(\xi)+\gamma(\pi).
\]
Thus the phase law is invariant under translation by $\gamma(\pi)$.

(c)
A parity collision $(k,l),(k',l')$ gives the integer vector
\[
\boldsymbol{\ell}
=\mathbf{e}_k-\mathbf{e}_l-\mathbf{e}_{k'}
+\mathbf{e}_{l'}\in\Z^N.
\]
It satisfies
\[
\sum_a\ell_a\theta_a=0,
\quad
\sum_a\ell_a=0.
\]
On the other hand,
\[
\sum_a\ell_a\gamma^\ast_a
=\bigl((l-k)-(l'-k')\bigr)\pi
\equiv\pi\pmod{2\pi}.
\]
Every point of $\cT'$ satisfies
$\sum_a\ell_a\vartheta_a=0$ in $\T$.
For every $c\in\T$, however,
\[
\sum_a\ell_a(\gamma^\ast_a+c)\equiv\pi\pmod{2\pi}.
\]
Hence $\gamma^\ast+c\mathbf1\notin\cT'$ for every $c$.
Part (a) excludes PGST between the ends.
\end{proof}

The uniformizing phase law is therefore built on the transfer
symmetry.
The same integer relation that makes the forced coherences
contradictory also excludes end-to-end transfer.
Part (c) also follows from \cite{GKSS12}.
Indeed, no modulus in Theorem~\ref{S:thm:collset} is a power
of two, a prime, or twice a prime.

This connection concerns only the end vertices.
Van Bommel \cite[Theorem~4]{vB19} characterized PGST between
arbitrary vertices of paths.
The sufficiency result for internal vertices is due to Coutinho,
Guo and van Bommel \cite[Theorem~2]{CGvB17}.
For example, $P_{11}$ has PGST between vertices $2$ and $10$.
Here $q=12$ is a collision modulus, so $P_{11}$ admits no
uniform average mixing.

\section{Beyond chains}\label{S:sec:beyond}

\subsection{Cartesian products}

\begin{theorem}\label{S:thm:product}
Let $G$ and $H$ be graphs on $n$ and $m$ vertices, respectively.
Let $\mu$ be a Borel probability measure.
Then $\Mh^{G\square H}[\mu]=\frac1{nm}J$ if and only if
\begin{equation}\label{S:eq:uncorrelated}
\int_\R M_G(t)_{uv}\,M_H(t)_{u'v'}\,d\mu(t)=\frac1{nm}
\end{equation}
for all $u,v\in V(G)$ and $u',v'\in V(H)$.
This condition implies
\[
\Mh^G[\mu]=\frac1nJ,
\quad
\Mh^H[\mu]=\frac1mJ.
\]
In particular, $G\square G$ admits uniform average mixing
if and only if $G$ admits instantaneous uniform mixing.
\end{theorem}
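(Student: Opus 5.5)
The plan starts from the factorization of the evolution. The adjacency matrix of $G\square H$ is $A_G\otimes I+I\otimes A_H$, and the two summands commute, so $U_{G\square H}(t)=U_G(t)\otimes U_H(t)$. Entrywise squared moduli respect the Kronecker product, which gives $M_{G\square H}(t)=M_G(t)\otimes M_H(t)$. Its $((u,u'),(v,v'))$ entry is $M_G(t)_{uv}M_H(t)_{u'v'}$. Integrating against $\mu$ shows that $\Mh^{G\square H}[\mu]=J/(nm)$ is literally condition \eqref{S:eq:uncorrelated}, entry by entry. This settles the first equivalence.

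For the marginals, I will use that $M_H(t)$ is doubly stochastic. Summing \eqref{S:eq:uncorrelated} over $v'\in V(H)$ gives $\int M_G(t)_{uv}\,d\mu=m\cdot\frac1{nm}=\frac1n$, so $\Mh^G[\mu]=J/n$. Summing over $v\in V(G)$ gives the statement for $H$ symmetrically. All integrands are bounded and nonnegative, so exchanging finite sums with integrals is harmless.

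For the Cartesian square, suppose $G\square G$ admits a uniformly mixing $\mu$. I apply \eqref{S:eq:uncorrelated} with $H=G$ and $(u',v')=(u,v)$, which gives $\int M_G(t)_{uv}^2\,d\mu=1/n^2$. The marginal statement gives $\int M_G(t)_{uv}\,d\mu=1/n$. Expanding the square, these two identities yield $\int (M_G(t)_{uv}-1/n)^2\,d\mu=0$. Hence $M_G(t)_{uv}=1/n$ for $\mu$-almost every $t$. There are finitely many pairs $(u,v)$, so the union of the exceptional null sets is null, and $M_G(t)=J/n$ on a set of full $\mu$-measure. That set is nonempty because $\mu$ is a probability measure, so some $t_0$ gives instantaneous uniform mixing. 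Conversely, if $M_G(t_0)=J/n$, then $M_{G\square G}(t_0)=(J/n)\otimes(J/n)=J/n^2$, and the point mass $\delta_{t_0}$ works.

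I expect no real obstacle. The only point needing care is the variance step, where I must use the diagonal choice $(u',v')=(u,v)$ and not an arbitrary pair, so that the cross moment becomes a second moment. The rest is bookkeeping with the Kronecker product and double stochasticity.
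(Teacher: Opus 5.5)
Your proposal is correct and follows the paper's proof essentially step for step. The shared steps are the factorization $U_{G\square H}(t)=U_G(t)\otimes U_H(t)$ giving the entrywise product formula, summing over $v'$ with the row sums of $M_H(t)$ for the marginals, the diagonal choice $(u',v')=(u,v)$ forcing a vanishing variance in the Cartesian-square case, and the point mass $\delta_{t_0}$ for the converse; your remark that a full-measure set is nonempty makes explicit a step the paper leaves implicit.
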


\begin{proof}
The identity
\[
A(G\square H)=A(G)\otimes I+I\otimes A(H)
\]
has commuting summands.
Hence
\[
U_{G\square H}(t)=U_G(t)\otimes U_H(t).
\]
It follows that
\[
M_{G\square H}(t)_{(u,u'),(v,v')}
=M_G(t)_{uv}M_H(t)_{u'v'}.
\]
Integration gives the equivalence.

Sum \eqref{S:eq:uncorrelated} over $v'$.
Since each row of $M_H(t)$ sums to $1$, we obtain
\[
\int_\R M_G(t)_{uv}\,d\mu(t)=\frac1n.
\]
The same argument gives the assertion for $H$.

Now suppose that $H=G$ and that $G\square G$ admits uniform
average mixing under $\mu$.
Taking $(u',v')=(u,v)$ gives
\[
\int_\R\left(M_G(t)_{uv}-\frac1n\right)^2d\mu(t)
=\frac1{n^2}-\frac2{n^2}+\frac1{n^2}
=0.
\]
Thus $M_G(t)_{uv}=1/n$ for $\mu$-almost every $t$.
There are only finitely many entries.
Consequently $M_G(t)=\frac1nJ$ for $\mu$-almost every $t$,
so $G$ admits instantaneous uniform mixing.

Conversely, suppose that $M_G(t_0)=\frac1nJ$.
Then $\mu=\delta_{t_0}$ gives uniform average mixing
on $G\square G$.
\end{proof}

\subsection{Trees with instantaneous uniform mixing}

\begin{lemma}[{\cite{GZ17}}]\label{S:lem:star}
The star $K_{1,3}$ admits instantaneous uniform mixing.
More precisely, $M(t_0)=\frac14J$ whenever
$\cos(\sqrt3\,t_0)=-\frac12$.
\end{lemma}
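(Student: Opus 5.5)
The plan is to compute $U(t_0)$ for $K_{1,3}$ directly from its spectral decomposition and then read off the squared moduli.

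\emph{Spectrum.} Label the center $0$ and the leaves $1,2,3$. The eigenvalues are $\pm\sqrt3$ and $0$, the last with multiplicity two. The $\pm\sqrt3$ eigenvectors are $(\pm\sqrt3,1,1,1)/\sqrt6$. The zero eigenspace consists of vectors vanishing at the center with leaf-sum zero. Write $\Theta=\sqrt3\,t$. Grouping $e^{i\Theta}$ with $e^{-i\Theta}$ as in Lemma~\ref{S:lem:bip} gives
\[
U(t)_{00}=\cos\Theta,\qquad U(t)_{0\ell}=\frac{i}{\sqrt3}\sin\Theta.
\]
For the leaf block, the projection onto the leaf-sum-zero subspace contributes $I-\frac13 J$. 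The $\pm\sqrt3$ idempotents each contribute $\frac16 J$ on the leaves, with phases $e^{\pm i\Theta}$. Hence
\[
U(t)_{\ell\ell'}=\delta_{\ell\ell'}-\frac13+\frac13\cos\Theta .
\]

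\emph{Evaluation at $\cos\Theta=-\frac12$.} Then $\sin^2\Theta=\frac34$. The center-center entry has squared modulus $\frac14$. The center-leaf entries have squared modulus $\frac13\cdot\frac34=\frac14$. For distinct leaves the entry is $-\frac13-\frac16=-\frac12$, with squared modulus $\frac14$. For equal leaves the entry is $1-\frac12=\frac12$, with squared modulus $\frac14$. So every entry of $M(t_0)$ equals $\frac14$, which gives $M(t_0)=\frac14 J$.

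No step is a real obstacle. The only point needing care is the leaf block: one must check that the zero-eigenspace projection restricted to the leaves is exactly $I-\frac13J$, i.e.\ the projector onto the complement of $\mathbf 1$ within the three leaf coordinates. This holds because the leaf parts of the $\pm\sqrt3$ eigenvectors span $\mathbf 1$. The time $t_0=2\pi/\sqrt{27}$ quoted in the Letter satisfies $\sqrt3\,t_0=2\pi/3$, so $\cos(\sqrt3\,t_0)=-\frac12$ as required.
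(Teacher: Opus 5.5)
Your proposal is correct and follows the paper's proof essentially verbatim: both read off $U_{oo}=\cos(\sqrt3\,t)$, $U_{o\ell}=\frac{i}{\sqrt3}\sin(\sqrt3\,t)$, $U_{\ell\ell}=(c+2)/3$ and $U_{\ell\ell'}=(c-1)/3$ from the spectral decomposition, then check that all four squared moduli equal $\frac14$ at $c=-\frac12$. Your explicit check of the leaf block $I-\frac13J$ and of $t_0=2\pi/\sqrt{27}$ is a harmless addition.
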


\begin{proof}
Let $o$ be the center, and let $\ell\ne\ell'$ be leaves.
Put $c=\cos(\sqrt3t)$.
The spectral decomposition gives
\[
\begin{aligned}
U_{oo}&=c,
&
U_{o\ell}&=\frac{i}{\sqrt3}\sin(\sqrt3t),\\
U_{\ell\ell}&=\frac{c+2}{3},
&
U_{\ell\ell'}&=\frac{c-1}{3}.
\end{aligned}
\]
At $c=-1/2$, all four squared moduli equal $1/4$.
\end{proof}

\begin{lemma}[Degree parity]\label{S:lem:parity}
Let $G$ be a bipartite graph on $n$ vertices with adjacency matrix $A$.
Let $\Sigma$ be as in Lemma~\ref{S:lem:bip}.
Suppose that $G$ admits $\epsilon$-uniform mixing.

Then the closure of $\{U(t):t\in\R\}$ contains a matrix $Q$
such that $|Q_{uv}|^2=1/n$ for all $u,v$.
If instantaneous uniform mixing occurs at $t_0$,
one may take $Q=U(t_0)$.
Every such $Q$ satisfies
\[
AQ=QA,
\quad
\Sigma Q\Sigma=\overline Q.
\]
Moreover, all vertices of $G$ have degrees of the same parity.
\end{lemma}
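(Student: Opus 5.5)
We prove the three assertions of Lemma~\ref{S:lem:parity} in order: existence of $Q$, the two matrix identities, and then degree parity.

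\emph{Existence of $Q$.} $\epsilon$-uniform mixing gives times $t_j$ with $\max_{u,v}|M(t_j)_{uv}-1/n|\to0$. The unitary group is compact, so a subsequence has $U(t_j)\to Q$ with $Q$ in the closure of $\{U(t):t\in\R\}$. Entrywise continuity of $|\cdot|^2$ then gives $|Q_{uv}|^2=1/n$. In the instantaneous case we simply take $Q=U(t_0)$.

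\emph{The identities.} Each $U(t)$ is a function of $A$, so $AU(t)=U(t)A$. This relation is closed under limits, hence $AQ=QA$. From $\Sigma A\Sigma=-A$ we get
\[
\Sigma U(t)\Sigma=e^{-itA}=\overline{U(t)},
\]
since $A$ is real. This relation also passes to the limit, giving $\Sigma Q\Sigma=\overline Q$. Entrywise, it says $\sigma_u\sigma_vQ_{uv}=\overline{Q_{uv}}$. So $Q_{uv}$ is real when $u,v$ lie on the same side and purely imaginary otherwise, which matches Lemma~\ref{S:lem:bip}. Combined with the modulus, this yields $\widehat Q=\sqrt n\,Q$ with entries in $\{\pm1\}$ within sublattices and in $\{\pm i\}$ across them.

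\emph{Degree parity.} Take adjacent vertices $a,b$; they lie on opposite sides, since the graph is bipartite. Compare the $(a,b)$ entries of $A\widehat Q=\widehat QA$:
\[
\sum_{x\sim a}\widehat Q_{xb}=\sum_{y\sim b}\widehat Q_{ay}.
\]
Each neighbour $x$ of $a$ lies on $b$'s side, so $\widehat Q_{xb}\in\{\pm1\}$. Likewise each neighbour $y$ of $b$ lies on $a$'s side, so $\widehat Q_{ay}\in\{\pm i\}$. Thus both sides are sums of elements of $\{\omega,-\omega\}$ with the same $\omega=1$ or $\omega=i$. After dividing by $\omega$, the left side is a sum of $\deg a$ terms $\pm1$ and the right side a sum of $\deg b$ terms $\pm1$. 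A sum of $k$ signs is congruent to $k$ modulo $2$, so $\deg a\equiv\deg b\pmod 2$.

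The one step needing care is choosing the right entry to compare; the $(a,a)$ entry alone gives no parity information. Using the $(a,b)$ entry above propagates parity along every edge, so all degrees in each connected component share one parity.

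If $G$ is disconnected, the argument does not reach across components, and we need a separate step. Since $Q$ commutes with $A$ and is a limit of $U(t)=\bigoplus_i U_i(t)$, it is block diagonal over the components. Then an entry $Q_{uv}$ with $u,v$ in different components is $0$, contradicting $|Q_{uv}|^2=1/n>0$. Hence $G$ is connected (or $n=1$), and all degrees have the same parity.
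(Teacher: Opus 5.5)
Your proof is correct in substance and rests on the same idea as the paper: get $Q$ by compactness, pass $AU=UA$ and $\Sigma U\Sigma=\overline U$ to the limit, then compare entries of $A\widehat Q=\widehat QA$ modulo $2$. There is one slip, and one step that is valid but unnecessary.

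\textbf{The slip.} A neighbour $y$ of $b$ lies on $a$'s side, so $\widehat Q_{ay}\in\{\pm1\}$, not $\{\pm i\}$. Read literally, your version makes the left side real and the right side purely imaginary. That would force both sums to vanish, and hence both degrees to be even, which is false already for $P_2$. Your next sentence (``the same $\omega$'') shows what you meant. With $\omega=1$ the edge step is fine.

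\textbf{The detour.} The paper compares the $(a,b)$ entry for \emph{arbitrary} vertices $a,b$, not just adjacent ones. For every $z\sim a$ one has $\sigma_z\sigma_b=-\sigma_a\sigma_b$, and for every $y\sim b$ one has $\sigma_a\sigma_y=-\sigma_a\sigma_b$. So both sides always lie in the same set $\{\pm\omega\}$, with $\omega=1$ when $\sigma_a\sigma_b=-1$ and $\omega=i$ when $\sigma_a\sigma_b=1$. Because $\Sigma$ is a global signature, this holds equally for non-adjacent pairs and for pairs in different components. Only $a=b$ is uninformative, as you observed. The paper therefore obtains equal parity for all pairs at once, and no connectivity argument is needed.

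Your extra step is correct: cross-component entries of $U(t)$, hence of $Q$, vanish, which contradicts $|Q_{uv}|^2=1/n$. It buys the side fact that $\epsilon$-uniform mixing forces connectedness, but the lemma does not require it. In that step, block-diagonality of $Q$ comes from its being a limit of block-diagonal $U(t)$, not from $AQ=QA$. Components sharing an eigenvalue admit commuting matrices that are not block diagonal.
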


\begin{proof}
Choose times $t_j$ such that $M(t_j)\to\frac1nJ$.
By compactness of the unitary group, a subsequence of $U(t_j)$
converges to a matrix $Q$.
Its entries satisfy $|Q_{uv}|^2=1/n$.

Each $U(t)$ commutes with $A$ and satisfies
\[
\Sigma U(t)\Sigma=e^{-itA}=\overline{U(t)}.
\]
These identities pass to the limit.

Put $\widehat Q=\sqrt n\,Q$.
The identity $\Sigma Q\Sigma=\overline Q$ gives
\[
\widehat Q_{uv}\in
\begin{cases}
\{1,-1\},&\sigma_u\sigma_v=1,\\
\{i,-i\},&\sigma_u\sigma_v=-1.
\end{cases}
\]
Fix vertices $a,b$.
Comparing the $(a,b)$ entries of $A\widehat Q=\widehat QA$ gives
\[
\sum_{z\sim a}\widehat Q_{zb}
=\sum_{y\sim b}\widehat Q_{ay}.
\]
Adjacent vertices have opposite signs.
Thus every term on either side belongs to $\{\omega,-\omega\}$,
where
\[
\omega=
\begin{cases}
1,&\sigma_a\sigma_b=-1,\\
i,&\sigma_a\sigma_b=1.
\end{cases}
\]
After division by $\omega$, the left side is a sum of
$\deg a$ numbers in $\{1,-1\}$.
The right side is a sum of $\deg b$ such numbers.
Reducing modulo $2$ yields $\deg a\equiv\deg b\pmod2$.
\end{proof}

For instantaneous uniform mixing, the entry structure of
$\widehat Q$ is given in \cite[Lemma~6.1]{GMR17}.
The limit argument for $\epsilon$-uniform mixing follows
\cite[Proposition~2]{Mon26}.

\begin{lemma}[{\cite[Proposition~8]{Mon26}}]\label{S:lem:cherry}
Let $T$ be a tree on $n\ge3$ vertices with no vertex of degree two.
Then two leaves of $T$ have a common neighbor.
\end{lemma}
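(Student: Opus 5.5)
Our plan is a leaf-counting argument on $T$. Since $n\ge3$, $T$ is not a single edge, so no two leaves are adjacent to each other. Let $L$ be the set of leaves and $I=V(T)\setminus L$ the set of internal vertices. $I$ is nonempty, and by hypothesis every vertex of $I$ has degree at least $3$.

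We then count degrees. The handshake identity for a tree gives
\[
\sum_{v}\deg v=2(n-1),
\]
so
\[
|L|+\sum_{v\in I}\deg v=2|L|+2|I|-2.
\]
This rearranges to
\[
\sum_{v\in I}(\deg v-2)=|L|-2.
\]
Every term on the left is at least $1$, so $|L|\ge|I|+2$.

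Next we map each leaf to its unique neighbour. That neighbour is internal: it cannot be a leaf, because two adjacent leaves would form a component $P_2$ and force $n=2$. So we have a map $L\to I$. Since $|L|>|I|$, the pigeonhole principle gives two distinct leaves with the same image, that is, two leaves with a common neighbour.

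The only point that needs care is the exclusion of adjacent leaves, which is exactly where $n\ge3$ and connectivity are used. Everything else is a routine count, so we expect no real obstacle.
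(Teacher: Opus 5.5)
Your proposal is correct and matches the paper's proof: both use the degree identity $\sum_v(\deg v-2)=-2$ to get $|\mathcal L|\ge 2+|\mathcal N|$, then note that the map sending each leaf to its (necessarily non-leaf, since $n\ge3$) neighbour cannot be injective. Your extra remark on why adjacent leaves are excluded makes explicit the step the paper states in one line.
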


\begin{proof}
Let $\mathcal L$ be the set of leaves, and let $\mathcal N$
be the set of remaining vertices.
Every vertex in $\mathcal N$ has degree at least $3$.
Since $n\ge3$, the neighbor of each leaf belongs to $\mathcal N$.

The identity $\sum_v(\deg v-2)=-2$ gives
\[
|\mathcal L|
=2+\sum_{v\in\mathcal N}(\deg v-2)
\ge2+|\mathcal N|.
\]
Thus the map sending each leaf to its neighbor cannot be injective.
\end{proof}

\begin{theorem}\label{S:thm:tree}
A tree admits instantaneous uniform mixing if and only if
it is $K_1$, $P_2$ or $K_{1,3}$.
The same classification holds for $\epsilon$-uniform mixing.
\end{theorem}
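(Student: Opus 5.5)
The plan is to prove the two directions of Theorem~\ref{S:thm:tree} separately, using Lemmas~\ref{S:lem:star}, \ref{S:lem:parity} and~\ref{S:lem:cherry}. Instantaneous uniform mixing implies $\epsilon$-uniform mixing. It therefore suffices to show two things: the three listed trees admit instantaneous uniform mixing, and every tree admitting $\epsilon$-uniform mixing is one of them.

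\emph{Sufficiency.} For $K_1$ we have $M(t)=J$ trivially. For $P_2$ the walk gives $U(t)_{12}=i\sin t$, so $t=\pi/4$ yields $M=\frac12J$; this is the case $q=3$ used in Theorem~\ref{S:thm:main}. For $K_{1,3}$ we invoke Lemma~\ref{S:lem:star}.

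\emph{Necessity.} Let $T$ be a tree on $n\ge2$ vertices with $\epsilon$-uniform mixing. A tree is bipartite, so Lemma~\ref{S:lem:parity} applies. It gives a limit matrix $Q$ with $|Q_{uv}|=n^{-1/2}$ and $AQ=QA$, and it shows that all degrees have the same parity. A leaf has degree $1$, so every degree is odd. In particular no vertex has degree two. If $n=2$, then $T=P_2$ and we are done. If $n\ge3$, Lemma~\ref{S:lem:cherry} supplies two leaves $u,w$ with a common neighbour. The vector $x=e_u-e_w$ then satisfies $Ax=0$. Hence $U(t)x=x$ for all $t$, and passing to the limit gives $Qx=x$. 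Reading off the $u$-coordinate gives $Q_{uu}-Q_{uw}=1$. The triangle inequality then yields $1\le 2n^{-1/2}$, so $n\le4$.

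It remains to list the trees with $3\le n\le4$ whose degrees are all odd. For $n=3$, $P_3$ has a vertex of degree $2$, so it is excluded. For $n=4$, $P_4$ has degree-$2$ vertices and is excluded, which leaves only $K_{1,3}$. Combined with $K_1$ and $P_2$, this gives the classification for both notions.

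The main subtlety is in the passage to the limit. One must make sure that the eigenvector identity $Qx=x$ survives along the subsequence; it does, because the identity holds exactly for every $U(t)$ and is therefore preserved under limits. One must also use the hypothesis $n\ge3$ in Lemma~\ref{S:lem:cherry} only after the degree-two exclusion has been established. Everything else is bookkeeping.
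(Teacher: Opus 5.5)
Your proposal is correct and follows the paper's proof essentially step for step. Sufficiency uses the explicit time $t=\pi/4$ for $P_2$ and Lemma~\ref{S:lem:star} for $K_{1,3}$. Necessity combines Lemma~\ref{S:lem:parity} (all degrees odd), Lemma~\ref{S:lem:cherry}, the stationary vector $e_u-e_w$ giving $Q_{uu}-Q_{uw}=1$ and hence $n\le4$, and a final check of the trees on three and four vertices.
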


\begin{proof}
The case $K_1$ is trivial.
The path $P_2$ mixes uniformly at $t=\pi/4$.
The star $K_{1,3}$ does so by Lemma~\ref{S:lem:star}.
Instantaneous uniform mixing also implies $\epsilon$-uniform mixing.

Conversely, let $T$ be a tree on $n\ge2$ vertices with
$\epsilon$-uniform mixing.
Choose $Q$ as in Lemma~\ref{S:lem:parity}.
Since $T$ has a leaf, all its degrees are odd.

If $n=2$, then $T=P_2$.
Suppose that $n\ge3$.
Lemma~\ref{S:lem:cherry} gives two leaves $u\ne w$
with a common neighbor.
The vector $x=e_u-e_w$ satisfies $Ax=0$.
Hence $U(t)x=x$ for every $t$, and passing to the limit gives $Qx=x$.
The $u$th coordinate is
\[
Q_{uu}-Q_{uw}=1.
\]
Since $|Q_{uu}|=|Q_{uw}|=n^{-1/2}$, we obtain
\[
1\le |Q_{uu}|+|Q_{uw}|=\frac2{\sqrt n}.
\]
Thus $n\le4$.
Among trees on three or four vertices, only $K_{1,3}$
has all degrees odd.
\end{proof}

\begin{corollary}\label{S:cor:squares}
For a tree $T$, the Cartesian square $T\square T$ admits
uniform average mixing if and only if
$T\in\{K_1,P_2,K_{1,3}\}$.
In particular, for every prime $p\ge5$, the square array
$P_{p-1}\square P_{p-1}$ admits no uniform average mixing.
The path $P_{p-1}$ itself admits uniform average mixing
by Theorem~\ref{S:thm:main}.
\end{corollary}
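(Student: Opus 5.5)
The corollary follows by combining Theorem~\ref{S:thm:product} with Theorem~\ref{S:thm:tree}. By Theorem~\ref{S:thm:product} applied with $H=G=T$, the Cartesian square $T\square T$ admits uniform average mixing if and only if $T$ admits instantaneous uniform mixing. Theorem~\ref{S:thm:tree} then says this happens exactly when $T\in\{K_1,P_2,K_{1,3}\}$. That gives the equivalence.

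For the particular case, I take $T=P_{p-1}$ with $p\ge5$ prime. This tree has $p-1\ge4$ vertices and is a path. It is therefore not $K_1$ or $P_2$, and it is not $K_{1,3}$: for $p=5$ the path $P_4$ has two vertices of degree two, while $K_{1,3}$ has a center of degree three. The equivalence just proved then excludes uniform average mixing on $P_{p-1}\square P_{p-1}$.

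The last sentence is Theorem~\ref{S:thm:main} with $q=p$, an odd prime. The degree-parity obstruction does not conflict with this: averaging over times is what allows the single path to mix uniformly, whereas the square needs instantaneous mixing of the factor. None of these steps is a real obstacle. The only check is that $P_4\ne K_{1,3}$, which the degree sequences settle immediately.
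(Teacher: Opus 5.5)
Your proposal is correct and is the same argument as the paper, which proves the corollary by applying Theorem~\ref{S:thm:product} (with $H=G=T$) and then Theorem~\ref{S:thm:tree}. Your explicit check that $P_4\ne K_{1,3}$, together with the appeal to Theorem~\ref{S:thm:main} at $q=p$, covers the ``in particular'' part.
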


\begin{proof}
Apply Theorems~\ref{S:thm:product} and~\ref{S:thm:tree}.
\end{proof}

\end{document}